\theoremstyle{plain}
\newtheorem{theorem}{Theorem}
\newtheorem{lemma}{Lemma}
\newtheorem{proposition}{Proposition}
\newtheorem{corollary}{Corollary}
\theoremstyle{definition}
\newtheorem{definition}{Definition}
\theoremstyle{remark}
\newtheorem{remark}{Remark}
\date{}
\begin{document}

\title{Algorithmic Problems for Computation Trees}

\author{Mikhail Moshkov \\
Computer, Electrical and Mathematical Sciences \& Engineering Division \\
King Abdullah University of Science and Technology (KAUST) \\
Thuwal 23955-6900, Saudi Arabia\\ mikhail.moshkov@kaust.edu.sa
}

\maketitle

\begin{abstract}
In this paper, we study three algorithmic problems involving computation trees: the optimization, solvability, and satisfiability problems. The solvability problem is concerned with recognizing computation trees that solve problems. The satisfiability problem is concerned with recognizing sentences that are true in at least one structure from a given set of structures. We study how the decidability of the optimization problem depends on the decidability of the solvability and
satisfiability problems. We also consider various examples with both decidable and undecidable
solvability and satisfiability problems.
\end{abstract}

{\it Keywords}: structure, computation tree, optimization.

\section{Introduction\label{S5.0}}

This paper is devoted to the study of  three algorithmic problems related to computation trees: problems of optimization, solvability and satisfiability. Computation trees  can be used as algorithms for solving  problems of combinatorial optimization, computation geometry, etc. Computation trees are a natural generalization of decision trees: besides one-place operations of predicate type (attributes), which are used in decision trees, in computation trees
many-place predicate and function operations may be used. 
Algebraic computation trees and some their generalization have been studied most intensively \cite{Ben-Or83,Gabrielov17,Grigoriev96}. This paper continues the line of research introduced by the work \cite{Moshkov22a}: we do not concentrate on specific types of computation trees, such as algebraic, but study computation trees over arbitrary structures.

Our consideration is based on the notions of signature and structure of this signature. Signature $\sigma$ is a finite or countable set of predicate and function symbols with their arity. Structure $U$ of the signature $\sigma$ is a pair $(A,I)$, where $A$ is a nonempty set called the universe of $U$ and $I$ is an interpretation function mapping the symbols of $\sigma$ to predicates and functions in $A$.

Each computation tree over structure $U$ is a pair $(S,U)$ of a computation tree scheme
$S$ of the signature $\sigma$ and the structure $U$. Such computation trees are used to solve problems over structure $U$, each of
which is a pair $(s,U)$ of a problem scheme $s$ of the signature $\sigma$  and the structure $U$. The
considered notions are discussed in Sects. \ref{S5.1}-\ref{S5.3}.

For a given class $C$ of structures of the signature $\sigma$, we say that a scheme of computation tree $S$
solves a scheme of problem $s$ relative to the class $C$ if, for any
structure $U\in C$, the computation tree $(S,U)$ solves the problem $(s,U)$.

We define the notion of so-called strictly limited complexity measure $\psi $ over the signature $\sigma$
and consider the following problem of optimization: for given sentence $%
\alpha $ from a set of sentences $H$ of the signature $\sigma$ and scheme of problem $s$, we should
find a scheme of computation tree $S$, which satisfies the following conditions: (i) $%
S$ solves the scheme of problem $s$ relative to the class $C(\alpha )$
consisting of all structures $U\in C$ for which $\alpha $ is true in $U$ and
(ii) $S$ has the minimum $\psi $-complexity among such schemes of computation trees. We
study the decidability of this problem depending on the decidability of 
problems of solvability and satisfiability.

The problem of solvability: for arbitrary sentence $\alpha \in H$, scheme of
problem $s$, and scheme of computation tree $S$, we should recognize if the scheme of
computation tree $S$ solves the scheme of problem $s$ relative to the class $C(\alpha
)$.

In Sect. \ref{S5.4}, we prove that the problem of solvability is decidable
if and only if the following problem of satisfiability is decidable: for an
arbitrary sentence $\beta \in H\wedge \Phi _{\exists }$, we should recognize
if there exists a structure $U\in $ $C$ for which $\beta $ is true in $U$.
Here $H\wedge \Phi _{\exists }$ is the set of sentences $\alpha \wedge
\gamma $ such that $\alpha \in H$, $\gamma \in \Phi _{\exists }$, and $\Phi
_{\exists }$ is the set of sentences of the signature $\sigma$ in prenex form with the prefixes of the
form $\exists \cdots \exists $.

In the same section, we prove that if $H$ is equal to the set of all
sentences of the signature $\sigma$, then the problem of solvability is
decidable if and only if the theory of the class $C$ is decidable. This gives
us many examples of both decidability and undecidability of the satisfiability
problem.

We will find more examples in Sect. \ref{S5.5}, where we study the case of the satisfiability problem when   $H=H_{1}\wedge \cdots \wedge H_{n}$ and,
for $i=1,\ldots ,n$,  $H_{i}$ is the set of all sentences of the signature $\sigma$ in prenex
form with prefixes in a nonempty set of words in the alphabet $\{\forall
,\exists \}$ that is closed relative to subwords.
We describe all classes of sentences $%
H_{1}\wedge \cdots \wedge H_{n}\wedge \Phi _{\exists }$ for which the
problem of satisfiability is decidable both for the case, when $C$ is
the class of all structures of the signature $\sigma$, and for the case,
when $C$ is the class of all finite structures of the signature $\sigma$.

In Sect. \ref{S5.6}, we prove that, for any strictly limited complexity
measure, the problem of optimization is undecidable if the problem of
satisfiability is undecidable. We prove also that the problem of
optimization is decidable if the problem of solvability is decidable and the
considered strictly limited complexity measure satisfies some additional
condition.

\section{Sequences of Expressions\label{S5.1}}

In this section, we consider the notions of function and predicate
expressions of a signature $\sigma $ and the notion of an $n$-sequence of
such expressions. These notions are used, when we consider schemes of
problems and computation trees.

Let $\sigma $ be a finite or countable \emph{signature} containing generally
both predicate and function symbols and their arity. We will not consider
nullary predicate symbols. We will specifically mention when nullary
function symbols (constants) are not allowed.
\index{Signature}

Let $\omega =\{0,1,2,\ldots \}$, $E_{2}=$ $\{0,1\}$, and $X=\{x_{i}:i\in
\omega \}$ be the set of variables. For $n\in \omega \setminus \{0\}$, we
denote $X_{n}=\{x_{0},\ldots ,x_{n-1}\}$. Let $\alpha $ be a formula of the
signature $\sigma $. We will use the following notation: $\alpha ^{1}=\alpha
$ and $\alpha ^{0}=\lnot \alpha $.

\begin{definition}
A \emph{function expression} of the signature $\sigma $ is an expression of
the form $x_{j}\Leftarrow f(x_{l_{1}},\ldots ,x_{l_{m}})$, where $f$ is an $%
m $-ary function symbol of the signature $\sigma $. A \emph{predicate
expression} of the signature $\sigma $ is an expression of the form $%
x_{l_{1}}=x_{l_{2}}$ or an expression of the form $r(x_{l_{1}},\ldots
,x_{l_{k}})$, where $r$ is a $k$-ary predicate symbol of the signature $%
\sigma $.
\end{definition}
\index{Signature!function expression}
\index{Signature!predicate expression}

\begin{definition}
Let $n\in \omega \setminus \{0\}$. An $n$-\emph{sequence of expressions} is
a pair $(n,\beta )$, where $\beta $ is a finite sequence of function and
predicate expressions.
\end{definition}
\index{Signature!$n$-sequence of expressions}

Let $\beta =\beta _{1},\ldots ,\beta _{m}$. For $i=1,2,\ldots ,m$, we
correspond to the expression $\beta _{i}$ a sequence $M_{i}=t_{i0},t_{i1},%
\ldots $ of terms of the signature $\sigma $ with variables from $X_{n}$.
Let $M_{1}=x_{0},x_{1},\ldots ,x_{n-2},x_{n-1},x_{n-1},\ldots $. Let the
sequences $M_{1},\ldots ,M_{i}$, $m>i\geq 1$, be already defined. We now
define the sequence $M_{i+1}$ corresponding to the expression $\beta _{i+1}$%
. If $\beta _{i}$ is a predicate expression, then $M_{i+1}=M_{i}$. Let $%
\beta _{i}$ be a function expression $x_{j}\Leftarrow f(x_{l_{1}},\ldots
,x_{l_{m}})$. Then $M_{i+1}=t_{i0},\ldots ,t_{ij-1},f(t_{il_{1}},\ldots
,t_{il_{m}}),t_{ij+1},t_{ij+2},\ldots $.

We associate with each predicate expression $\beta _{i}$ of the sequence $%
\beta $ an atomic formula $\kappa (n,\beta ,\beta _{i})$ of the signature $%
\sigma $ with variables from $X_{n}$. Let $\beta _{i}$ be the expression $%
x_{l_{1}}=x_{l_{2}}$. Then $\kappa (n,\beta ,\beta _{i})$ is equal to $%
t_{il_{1}}=t_{il_{2}}$. Let $\beta _{i}$ be the expression $%
r(x_{l_{1}},\ldots ,x_{l_{k}})$. Then $\kappa (n,\beta ,\beta _{i})$ is
equal to $r(t_{il_{1}},\ldots ,t_{il_{k}})$.

\section{Schemes of Computation Trees and Computation Trees\label{S5.2}}

In this section, we discuss the notions of a scheme of computation tree and its
complete path, and the notion of a computation tree.

Later we will specifically clarify whether equality $=$ can be used in the
considered formulas of the signature $\sigma $.

For $n\in \omega \setminus \{0\}$, denote $Q_{n}^{=}(\sigma )$ the set of
formulas of the signature $\sigma $ with variables from $X_{n}$ that are
atomic formulas or negations of atomic formulas. Equality sign $=$ in $%
Q_{n}^{=}(\sigma )$ means that formulas of the form $t_{1}=t_{2}$ and $\lnot
(t_{1}=t_{2})$ belong to $Q_{n}^{=}(\sigma )$, where $t_{1}$ and $t_{2}$ are
terms of the signature $\sigma $ with variables from $X_{n}$. We denote by $%
Q_{n}(\sigma )$ the set of formulas from $Q_{n}^{=}(\sigma )$ that do not
contain equality. The set $Q_{n}(\sigma )$ contains only formulas of the
form $r(t_{1},\ldots ,t_{k})$ or $\lnot r(t_{1},\ldots ,t_{k})$, where $r$
is a $k$-ary predicate symbol from $\sigma $ and $t_{1},\ldots ,t_{k}$ are
terms of the signature $\sigma $ with variables from $X_{n}$.

\begin{definition}
A \emph{scheme of computation tree} of
the signature $\sigma $ is a pair $S=(n,G)$, where $n\in \omega \setminus
\{0\}$ and $G$ is a finite directed tree with root. This tree has three
types of nodes: function, predicate and terminal. A \emph{function} node is
labeled with a function expression of the signature $\sigma $. One edge
leaves the function node. This edge is not labeled. A \emph{predicate} node
is labeled with a predicate expression of the signature $\sigma $. This
expression can be of the form $x_{l_{1}}=x_{l_{2}}$ if we allow equality.
Two edges leave the predicate node. One edge is labeled with the number $0$
and another edge is labeled with the number $1$. A \emph{terminal} node is
labeled with a number from $\omega $. This node has no leaving edges.
\end{definition}
\index{Scheme of computation tree of signature}

The set $X_{n}$ will be called the \emph{set of input variables} of the
scheme of computation tree $S$. Usually, we will not distinguish the scheme $S$ and
its graph $G$.
\index{Scheme of computation tree of signature!set of input variables}

\begin{definition}
A \emph{complete path }of the scheme $S$ is a directed path, which starts in
the root and finishes in a terminal node of $S$.
\end{definition}
\index{Scheme of computation tree of signature!complete path}

Let $\tau =w_{1},d_{1},w_{2},d_{2},\ldots ,w_{m},d_{m},w_{m+1}$ be a complete
path of the scheme $S$. We denote $\beta =\beta _{1},\ldots ,\beta _{m}$ the
sequence of function and predicate expressions attached to nodes $%
w_{1},\ldots ,w_{m}$. Let us consider the $n$-sequence of expressions $%
(n,\beta )$. We correspond to each predicate node $w_{i}$ of the path $\tau $
the formula $\kappa (n,\beta ,\beta _{i})^{c}$ from $Q_{n}^{=}(\sigma )$,
where $c$ is the number attached to the edge $d_{i}$.

We now define the set of formulas $F(\tau )$ associated with the complete
path $\tau $. If $\tau $ contains predicate nodes, then $F(\tau )$ is the
set of formulas associated with predicate nodes of $\tau $. If $\tau $ does
not contain predicate nodes, then $F(\tau )=\{\mu _{\tau }\}$, where $\mu
_{\tau }$ is the formula $x_{0}=x_{0}$ if equality is allowed and $\lnot
r(x_{0},\ldots ,x_{0})\vee r(x_{0},\ldots ,x_{0})$ if equality is not
allowed and $r$ is a predicate symbol of the signature $\sigma $. Note that
if $\tau $ does not contain predicate nodes, then $\tau $ is the only
complete path of the scheme $S$. Let $F(\tau )=\{\alpha _{1},\ldots ,\alpha
_{k}\}$. Denote by $\pi _{\tau }(S)$ the formula $\alpha _{1}\wedge \cdots
\wedge \alpha _{k}$. Denote by $t_{\tau }$ the number attached to the
terminal node $w_{m+1}$ of the path $\tau $.

\begin{definition}
Let $U$ be a \emph{structure} of the signature $\sigma $ with the universe $%
A $ and $S=(n,G)$ be a scheme of computation tree of the signature $\sigma $.
The pair $\Gamma =(S,U)$ will be called a \emph{computation tree}  over $U$ with the
set of input values $X_{n}$. The scheme $S$ will be called the \emph{scheme }%
of the computation tree $\Gamma $.
\end{definition}
\index{Structure of signature}
\index{Computation tree over structure}
\index{Computation tree over structure!scheme}

\begin{definition}
A complete path $\tau $ of the scheme $S$ will be called \emph{realizable}
in the structure $U$ for a tuple $\bar{a}\in A^{n}$ if any formula from $%
F(\tau )$ is true in $U$ on the tuple $\bar{a}$.
\end{definition}
\index{Computation tree of signature!complete path!realizable}

One can show that there exists exactly one complete path of $S$ that is
realizable in the structure $U$ for the tuple $\bar{a}$.

We correspond to the computation tree $\Gamma $ a function $\varphi _{\Gamma
}:A^{n}\rightarrow \omega $. Let $\bar{a}\in A^{n}$ and $\tau $ be a
realizable in the structure $U$ for the tuple $\bar{a}$ complete path of the
scheme $S$. Then $\varphi _{\Gamma }(\bar{a})=t_{\tau }$. We will say that
the computation tree $\Gamma $ \emph{implements} the function $\varphi _{\Gamma }$.
\index{Computation tree over structure!implementing function}

We denote by $\mathrm{Tree}^{=}(\sigma )$ the set of schemes of
computation trees of the signature $\sigma $. Denote by $\mathrm{Tree}(\sigma )$ the
set of schemes from $\mathrm{Tree}^{=}(\sigma )$ that do not contain
equality in expressions attached to nodes.

\section{Schemes of Problems and Problems\label{S5.3}}

In this section, we consider the notions of a scheme of problem and its
special representation, and the notion of a problem. We also define the
notion of a computation tree solving a problem.

\begin{definition}
A \emph{scheme of problem} of the signature $\sigma $ is a tuple $s=(n,\nu
,\beta _{1},\ldots ,\beta _{m})$, where $n\in \omega \setminus \{0\}$, $m\in
\omega \setminus \{0\}$, $\beta _{1},\ldots ,\beta _{m}$ are function and
predicate expressions of the signature $\sigma $, and there is $k\in \omega
\setminus \{0\}$ such that $\nu :E_{2}^{k}\rightarrow \omega $ and there are
exactly $k$ predicate expressions in the sequence $\beta _{1},\ldots ,\beta
_{m}$.
\end{definition}
\index{Scheme of problem of signature}

The set $X_{n}$ is called the \emph{set of input variables }for the scheme
of problem $s$. We denote by $\beta $ the sequence of expressions $\beta
_{1},\ldots ,\beta _{m}$. Let us consider the $n$-sequence of expressions $%
(n,\beta )$. Let $\beta _{i_{1}},\ldots ,\beta _{i_{k}}$ be all predicate
expressions in the sequence $\beta $. We correspond to each predicate
expression $\beta _{i_{j}}$ the formula $\alpha _{j}=\kappa (n,\beta ,\beta
_{i_{j}})$ from $Q_{n}^{=}(\sigma )$. The tuple $(n,\nu ,\alpha _{1},\ldots
,\alpha _{k})$ will be called the \emph{special representation} of the
scheme of problem $s$. We correspond to the scheme $s$ and to each tuple $%
\bar{\delta}=(\delta _{1},\ldots ,\delta _{k})\in $ $E_{2}^{k}$ the formula $%
\pi _{\bar{\delta}}(s)=\alpha _{1}^{\delta _{1}}\wedge \cdots \wedge \alpha
_{k}^{\delta _{k}}$.
\index{Scheme of problem of signature!set of input variables}
\index{Scheme of problem of signature!special representation}

\begin{definition}
Let $U$ be a \emph{structure} of the signature $\sigma $ consisting of a
universe $A$ and an interpretation $I$ of symbols from $\sigma $. Let $s=(n,\nu
,\beta _{1},\ldots ,\beta _{m})$ be a scheme of problem of the signature $%
\sigma $. The pair $z=(s,U)$ will be called a \emph{problem} over $U$ with
the set of input variables $X_{n}$. The scheme $s$ will be called the \emph{%
scheme }of the problem $z$.
\end{definition}
\index{Problem over structure}
\index{Problem over structure!scheme}

Let us consider the special representation $(n,\nu ,\alpha _{1},\ldots
,\alpha _{k})$ of the scheme of problem $s$. The interpretation $I$
corresponds to the formula $\alpha _{j}$, $j=1,\ldots ,k$, a function from $%
A^{n}$ to $E_{2}$ in a natural way: any predicate including predicates of
the form $x_{l_{1}}=x_{l_{2}}$ can be considered as a function with values
from the set $E_{2}$.

We correspond to the problem $z$ the function $\varphi _{z}:A^{n}\rightarrow
\omega $ defined as follows: $\varphi _{z}(\bar{a})=\nu (\alpha _{1}(\bar{a}%
),\ldots ,\alpha _{k}(\bar{a}))$ for any $\bar{a}\in A^{n}$. The problem $z$
may be interpreted as a problem of searching for the number $\varphi _{z}(%
\bar{a})$ for a given $\bar{a}\in A^{n}$.

\begin{definition}
Let $S=(n,G)$ be a scheme of computation tree of the signature $\sigma $. We will say
that the computation tree $\Gamma =(S,U)$ \emph{solves} the problem $z=(s,U)$ if the
functions $\varphi _{\Gamma }$ and $\varphi _{z}$ coincides.
\end{definition}

We denote by $\mathrm{Probl}^{=}(\sigma )$ the set of schemes of problems of
the signature $\sigma $. Denote by $\mathrm{Probl}(\sigma )$ the set of
schemes from $\mathrm{Probl}^{=}(\sigma )$ that do not contain equality in
predicate expressions.

\section{Solvability Versus Satisfiability\label{S5.4}}

In this section, we consider two algorithmic problems. The problem of
solvability is related to the recognition of schemes of computation trees solving
schemes of problems relative to a class of structures. The problem of
satisfiability is about the recognition of sentences that are true in at least one structure from a given set of structures. We show that the problem of solvability is decidable if and only
if corresponding to it   problem of satisfiability is decidable. We also consider various
examples with both decidable and undecidable solvability and satisfiability problems.

Let $C$ be a class of structures of the signature $\sigma $.

\begin{definition}
We will say that a scheme of computation tree $S=(n_{1},G)$ from $\mathrm{Tree}%
^{=}(\sigma )$ \emph{solves} a scheme of problem $s=(n_{2},\nu ,\beta
_{1},\ldots ,\beta _{m})$ from $\mathrm{Probl}^{=}(\sigma )$ relative to the
class $C$, if $n_{1}=n_{2}$ and either $C=\emptyset $ or $C\neq \emptyset $
and, for any structure $U$ from $C$, the computation tree $(S,U)$ solves the problem $%
(s,U)$.
\end{definition}

For a sentence (formula without free variables) $\alpha $ of the signature $%
\sigma $, denote $C(\alpha )=\{U:U\in $ $C,U\models \alpha \}$, where the
notation $U\models \alpha $ means that the sentence $\alpha $ is true in $U$%
. Let $H$ be a nonempty set of sentences of the signature $\sigma $ and $$(%
\mathrm{Probl},\mathrm{Tree}\mathcal{)\in }\{(\mathrm{Probl}^{=}(\sigma ),%
\mathrm{Tree}^{=}(\sigma )),(\mathrm{Probl}(\sigma ),\mathrm{Tree}(\sigma
))\}.$$ We can consider the pair $(\mathrm{Probl}(\sigma ),\mathrm{Tree}%
(\sigma ))$ only if the signature $\sigma $ contains predicate symbols.

\begin{definition}
We now define the \emph{problem of solvability} for the quadruple $(\mathrm{%
Probl},\mathrm{Tree},H,$ $C)$: for arbitrary $\alpha \in H$, $s\in \mathrm{%
Probl}$, and $S\in $ $\mathrm{Tree}$, we should recognize if the scheme of
computation tree $S$ solves the scheme of problem $s$ relative to the class $C(\alpha
)$.
\end{definition}
\index{Algorithmic problem!problem of solvability}

\begin{definition}
Let us define the \emph{problem of satisfiability} for the pair $(H,C)$: for
an arbitrary $\alpha \in H$, we should recognize if there exists a structure
$U\in $ $C$ such that $U\models \alpha $.
\end{definition}
\index{Algorithmic problem!problem of satisfiability}

This is the general definition of the satisfiability problem. Later, for each version of the solvability problem, we will consider its corresponding version of the satisfiability problem and prove that they are both decidable or both undecidable.

Let $H_{1},\ldots ,H_{m}$ be nonempty sets of sentences of the signature $%
\sigma $. We denote by $H_{1}\wedge \cdots \wedge H_{m}$ the class of
sentences of the form $\alpha _{1}\wedge \cdots \wedge \alpha _{m}$, where $%
\alpha _{1}\in H_{1},\ldots ,\alpha _{m}\in H_{m}$. Let $\Pi $ be a set of
words in the alphabet $\{\forall ,\exists \}$. Denote by $\Phi ^{=}(\Pi
,\sigma )$ the class of all sentences in prenex form of the signature $%
\sigma $ with prefixes from $\Pi $. Denote by $\Phi (\Pi ,\sigma )$ the
class of all sentences from $\Phi ^{=}(\Pi ,\sigma )$ that do not contain
equality.

Let $w$ be a word in the alphabet $\{\forall ,\exists ,\forall ^{\ast
},\exists ^{\ast }\}$. The set $P(w)$ of prefixes is defined in the
following way:
\begin{eqnarray*}
P(\forall ^{n}) =\{\forall ^{i}:0\leq i\leq n\},\;P(\exists ^{n})=\{\exists
^{i}:0\leq i\leq n\}, \\
P(\forall ^{\ast }) =\{\forall ^{i}:i\in \omega \},\;P(\exists ^{\ast
})=\{\exists ^{i}:i\in \omega \}, \\
P(w_{1}w_{2}) =\{u_{1}u_{2}:u_{1}\in P(w_{1}),u_{2}\in P(w_{2})\}.
\end{eqnarray*}

Denote $H^{=}(\sigma )$ the set of sentences of the signature $\sigma $.
Denote $H(\sigma )$ the set of sentences of the signature $\sigma $ that do
not contain equality.

\begin{theorem}
\label{T5.1}Let $C$ be a nonempty class of structures of the signature $%
\sigma $ and $H$ be a nonempty set of sentences of the signature $\sigma $.

{\rm (a) }Let $H\subseteq H^{=}(\sigma )$. The problem of solvability for
the quadruple $(\mathrm{Probl}^{=}(\sigma ),\mathrm{Tree}^{=}(\sigma ),$ $H,C)$
is decidable if and only if the problem of satisfiability for the pair $%
(H\wedge \Phi ^{=}(P(\exists ^{\ast }),\sigma ),C)$ is decidable.

{\rm (b) }Let $H\subseteq H(\sigma )$. The problem of solvability for the
quadruple $(\mathrm{Probl}(\sigma ),\mathrm{Tree}(\sigma ),H,C)$ is
decidable if and only if the problem of satisfiability for the pair $%
(H\wedge \Phi (P(\exists ^{\ast }),\sigma ),C)$ is decidable.
\end{theorem}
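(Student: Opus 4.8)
The plan is to reduce the solvability question to a satisfiability question by observing that "$S$ solves $s$ relative to $C(\alpha)$" fails precisely when there is some structure $U \in C$ with $U \models \alpha$ and some input tuple $\bar a$ on which the computation tree $(S,U)$ produces a different output than the problem $(s,U)$ demands. First I would handle the trivial degenerate cases: if $C = \emptyset$ then $S$ always solves $s$ relative to $C(\alpha)$, and if the input-variable counts $n_1 \ne n_2$ then $S$ never solves $s$; these are decidable conditions on the syntactic objects, so we may assume $n_1 = n_2 = n$ and $C \ne \emptyset$.

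The core step is the translation. Fix $\alpha \in H$, a scheme of problem $s$ with special representation $(n,\nu,\alpha_1,\ldots,\alpha_k)$, and a scheme $S=(n,G)$. Enumerate the finitely many complete paths $\tau$ of $S$; for each such $\tau$ we have the quantifier-free formula $\pi_\tau(S)$ over $X_n$ and the terminal value $t_\tau \in \omega$. A tuple $\bar a \in A^n$ follows $\tau$ exactly when $U \models \pi_\tau(S)(\bar a)$, and then $\varphi_\Gamma(\bar a) = t_\tau$; meanwhile $\varphi_z(\bar a) = \nu(\alpha_1(\bar a),\ldots,\alpha_k(\bar a))$. So $S$ fails to solve $s$ on $U$ iff there is a path $\tau$ and a tuple $\bar\delta \in E_2^k$ with $\nu(\bar\delta) \ne t_\tau$ and a tuple $\bar a$ with $U \models \pi_\tau(S)(\bar a) \wedge \alpha_1^{\delta_1}(\bar a) \wedge \cdots \wedge \alpha_k^{\delta_k}(\bar a)$, i.e. $U \models \pi_\tau(S)(\bar a) \wedge \pi_{\bar\delta}(s)(\bar a)$. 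Quantifying the free variables $x_0,\ldots,x_{n-1}$ existentially, and taking the finite disjunction over all bad pairs $(\tau,\bar\delta)$, we get a single sentence
\[
\gamma_{S,s} \;=\; \bigvee_{\substack{\tau,\ \bar\delta\\ \nu(\bar\delta)\ne t_\tau}} \exists x_0 \cdots \exists x_{n-1}\,\bigl(\pi_\tau(S) \wedge \pi_{\bar\delta}(s)\bigr)
\]
such that $S$ does \emph{not} solve $s$ relative to $C(\alpha)$ iff there is $U \in C$ with $U \models \alpha \wedge \gamma_{S,s}$. Now $\gamma_{S,s}$ is a disjunction of prenex $\exists\cdots\exists$ sentences, and by distributing $\alpha$ over the disjunction, the condition becomes: some disjunct $\alpha \wedge \exists\cdots\exists(\cdots)$ is satisfiable in $C$. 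Each such disjunct lies in $H \wedge \Phi^{=}(P(\exists^{\ast}),\sigma)$ (and in the equality-free variant, in $H \wedge \Phi(P(\exists^{\ast}),\sigma)$, since $S \in \mathrm{Tree}(\sigma)$, $s \in \mathrm{Probl}(\sigma)$, and $\alpha \in H(\sigma)$ contain no equality, so the $\pi$-formulas are equality-free). Thus a satisfiability oracle for the pair $(H\wedge\Phi^{=}(P(\exists^{\ast}),\sigma),C)$ lets us decide solvability: check each disjunct, and $S$ solves $s$ iff none is satisfiable.

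For the converse direction I would go the other way: given an arbitrary $\beta = \alpha \wedge \exists x_0 \cdots \exists x_{p-1}\,\psi(\bar x) \in H \wedge \Phi^{=}(P(\exists^{\ast}),\sigma)$ with $\psi$ quantifier-free (pad with dummy variables so that at least one variable is quantified, using $p \ge 1$), I must manufacture a scheme of problem $s$ and a scheme of computation tree $S$ such that "$S$ fails to solve $s$ relative to $C(\alpha)$" is equivalent to "$\beta$ is satisfiable in $C$." The idea is to write $\psi$ in disjunctive normal form over atomic formulas, build from each conjunction of literals a branch of a predicate-only scheme $S$ whose complete paths realize exactly those literal patterns, set the terminal labels of the "$\psi$-satisfying" paths to, say, $1$ and all others to $0$, and take $s$ to be the scheme whose function $\varphi_z$ is identically $0$ — realized by a problem scheme with a single predicate expression and $\nu \equiv 0$. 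Then $S$ solves $s$ relative to $C(\alpha)$ iff no structure in $C(\alpha)$ has a tuple realizing a $\psi$-satisfying path, i.e. iff $\beta$ is unsatisfiable in $C$. There is a wrinkle: the atomic formulas appearing in $\psi$ may be built from terms, not just variables, whereas predicate expressions in a scheme test predicates or equalities applied to variables; this is handled by prefacing each branch with the appropriate sequence of function expressions $x_j \Leftarrow f(\ldots)$ so that the induced terms $t_{ij}$ match the subterms in $\psi$, exactly as in the construction of $\kappa(n,\beta,\beta_i)$ in Section~\ref{S5.1}. One must also take care that in the equality-free case $\psi$ involves no equality, which is automatic since $\beta \in H(\sigma) \wedge \Phi(\ldots,\sigma)$.

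I expect the main obstacle to be the bookkeeping in the converse construction: encoding an arbitrary quantifier-free formula with nested function terms as a computation-tree scheme whose complete paths correspond exactly to the DNF clauses, and verifying that the map from schemes and sentences is effective (so that a solvability oracle genuinely yields a satisfiability algorithm). The forward direction is essentially routine once the path-by-path decomposition is set up; the subtlety there is only to confirm that $\gamma_{S,s}$ can be written so that each disjunct is genuinely in the required class, in particular that the equality-free hypotheses propagate. I would also flag the edge case where a complete path of $S$ has no predicate nodes (so $F(\tau) = \{\mu_\tau\}$ and $\pi_\tau(S) = \mu_\tau$), which must be included uniformly in the disjunction; since $\mu_\tau$ is a tautology this causes no difficulty but should be mentioned for completeness.
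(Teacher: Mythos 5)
Your proposal is correct and follows essentially the same route as the paper: one direction reduces solvability to satisfiability via the $\exists$-closures of $\pi_\tau(S)\wedge\pi_{\bar\delta}(s)$ over the pairs with $t_\tau\neq\nu(\bar\delta)$, and the other encodes the DNF of the matrix of the $\exists\cdots\exists$ sentence into computation-tree schemes tested against the identically-zero problem scheme. The only cosmetic difference is that you use a single scheme with $0/1$ terminal labels (one solvability query) where the paper builds one scheme per DNF clause; both work, and your remarks on term-building via function expressions and on the equality-free case match what the paper leaves implicit.
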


\begin{proof}
(a) Let the problem of solvability for the quadruple $(\mathrm{Probl}%
^{=}(\sigma ),\mathrm{Tree}^{=}(\sigma ),H,C)$ be decidable. We now
describe an algorithm for solving the problem of satisfiability for the pair
$(H\wedge \Phi ^{=}(P(\exists ^{\ast }),\sigma ),C)$. Let $\alpha \in H$ and
$\beta \in \Phi ^{=}(P(\exists ^{\ast }),\sigma )$. We construct a sentence
of the signature $\sigma $ that is logically equivalent to the sentence $%
\beta $ and is of the form $\exists x_{0}\ldots \exists x_{n-1}((\beta
_{11}\wedge \cdots \wedge \beta _{1m_{1}})\vee \cdots \vee (\beta
_{k1}\wedge \cdots \wedge \beta _{km_{k}}))$, where, for $j=1,\ldots ,k$ and
$i=1,\ldots ,m_{j}$, $\beta _{ji}\in Q_{n}^{=}(\sigma )$.

For $j=1,\ldots ,k$, we construct a scheme of computation tree $S_{j}=(n,G_{j})$ from
$\mathrm{Tree}^{=}(\sigma )$ such that, for any structure $U$ of the
signature $\sigma $ and any tuple $\bar{a}\in A^{n}$, where $A$ is the
universe of $U$,%
\[
\varphi _{(S_{j},U)}(\bar{a})=\left\{
\begin{array}{ll}
0, & \mathrm{if}\ U\models \lnot (\beta _{j1}(\bar{a})\wedge \cdots \wedge
\beta _{jm_{j}}(\bar{a})), \\
1, & \mathrm{if}\ U\models \beta _{j1}(\bar{a})\wedge \cdots \wedge \beta
_{jm_{j}}(\bar{a}).%
\end{array}%
\right.
\]%
We denote by $s_{0}=(n,\nu ,x_{0}=x_{0})$ the scheme of problem from $%
\mathrm{Probl}^{=}(\sigma )$ such that $\nu :E_{2}\rightarrow \{0\}$, i.e., $%
\varphi _{(s_{0},U)}\equiv 0$ for any structure $U$ of the signature $\sigma
$.

Using an algorithm that solves the problem of solvability for the quadruple $%
(\mathrm{Probl}^{=}(\sigma ),$ $\mathrm{Tree}^{=}(\sigma ),H,C)$, for $%
j=1,\ldots ,k$, we recognize if the scheme of computation tree $S_{j}$ solves the
scheme of problem $s_{0}$ relative to the class $C(\alpha )$. It is not
difficult to show that a structure $U\in $ $C$ such that $U\models \alpha
\wedge \beta $ exists if and only if there exists $j\in \{1,\ldots ,k\}$ for
which the scheme of computation tree $S_{j}$ does not solve the scheme of problem $%
s_{0}$ relative to the class $C(\alpha )$.

Let the problem of satisfiability for the pair $(H\wedge \Phi ^{=}(P(\exists
^{\ast }),\sigma ),C)$ be decidable. We now describe an algorithm solving
the problem of solvability for the quadruple $(\mathrm{Probl}^{=}(\sigma ),$ $
\mathrm{Tree}^{=}(\sigma ),H,C)$. Let $\alpha \in H$, $S=(n_{1},G)$ be a
scheme of computation tree from $\mathrm{Tree}^{=}(\sigma )$, and $s=(n_{2},\nu
,\beta _{1},\ldots ,\beta _{m})$ be a scheme of problem from $\mathrm{Probl}%
^{=}(\sigma )$ with the special representation $(n_{2},\nu ,\alpha
_{1},\ldots ,\alpha _{k})$.

If $n_{1}\neq n_{2}$, then the scheme of computation tree $S$ does not solve the
scheme of problem $s$ relative to the class $C(\alpha )$. Let $n_{1}=n_{2}$.
Let $\tau $ be a complete path of the scheme of computation tree $S$ and $\bar{\delta}%
\in $ $E_{2}^{k}$. Denote $\gamma _{\tau \bar{\delta}}=$ $\exists
x_{0}\ldots \exists x_{n_{1}-1}(\pi _{\tau }(S)\wedge \pi _{\bar{\delta}%
}(s)) $.

Using an algorithm solving the problem of satisfiability for the pair $%
(H\wedge \Phi ^{=}(P(\exists ^{\ast }),\sigma ),C)$, for each pair of path $%
\tau $ and tuple $\bar{\delta}$ such that $t_{\tau }\neq \nu (\bar{\delta})$%
, we check if there exists a structure $U\in $ $C$ in which the sentence $%
\alpha \wedge \gamma _{\tau \bar{\delta}}$ is true. One can show that the
scheme of computation tree $S$ does not solve the scheme of problem $s$ relative to
the class $C(\alpha )$ if and only if, for at least one of the considered
pairs $\tau $ and $\bar{\delta}$, the sentence $\alpha \wedge \gamma _{\tau
\bar{\delta}}$ is true in a structure from the class $C$.

(b) The second part of the theorem statement can be proved in a similar way.
\end{proof}

We now consider some corollaries of Theorem \ref{T5.1}.

\begin{definition}
Let $C$ be a nonempty class of structures of the signature $\sigma $. The
\emph{theory} $\mathrm{Th}(C)$ of the class $C$ consists of all sentences
from $H^{=}(\sigma )$ that are true in all structures from $C$. This theory
is called \emph{decidable} if there exists an algorithm that, for a given
sentence from $H^{=}(\sigma )$, recognizes if this sentence belongs to $%
\mathrm{Th}(C)$.
\end{definition}
\index{Theory!of class of structures}

\begin{proposition}
\label{P5.1}Let $C$ be a nonempty class of structures of the signature $%
\sigma $. The problem of solvability for the quadruple $(\mathrm{Probl}%
^{=}(\sigma ),\mathrm{Tree}^{=}(\sigma ),H^{=}(\sigma ),C)$ is decidable if
and only if the theory $\mathrm{Th}(C)$ of the class $C$ is decidable.
\end{proposition}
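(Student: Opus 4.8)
The plan is to deduce Proposition \ref{P5.1} from Theorem \ref{T5.1}(a) by taking $H = H^{=}(\sigma)$ and showing that the satisfiability problem for the pair $(H^{=}(\sigma) \wedge \Phi^{=}(P(\exists^{\ast}),\sigma), C)$ is decidable if and only if $\mathrm{Th}(C)$ is decidable. Since $H^{=}(\sigma)$ is already the set of all sentences of the signature, the compound class $H^{=}(\sigma) \wedge \Phi^{=}(P(\exists^{\ast}),\sigma)$ is (up to provable equivalence) just $H^{=}(\sigma)$ again: any sentence $\alpha \wedge \beta$ with $\alpha$ arbitrary and $\beta$ existential-prenex is itself an element of $H^{=}(\sigma)$, and conversely any $\alpha \in H^{=}(\sigma)$ can be written as $\alpha \wedge (\exists x_0\, (x_0 = x_0))$, which lies in the compound class. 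So the satisfiability problem for that pair is really the bare satisfiability problem for $(H^{=}(\sigma), C)$: given an arbitrary sentence $\gamma$, decide whether some $U \in C$ has $U \models \gamma$.

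It then remains to observe that the satisfiability problem for $(H^{=}(\sigma), C)$ and the decidability of $\mathrm{Th}(C)$ are complementary via negation. Indeed, $\gamma \notin \mathrm{Th}(C)$ means exactly that some $U \in C$ fails $\gamma$, i.e. some $U \in C$ satisfies $\lnot\gamma$; so $\gamma \in \mathrm{Th}(C)$ iff $\lnot\gamma$ is \emph{not} satisfiable over $C$. Hence an algorithm deciding satisfiability over $C$ yields one deciding membership in $\mathrm{Th}(C)$ (run it on $\lnot\gamma$ and flip the answer), and vice versa (to test satisfiability of $\gamma$, test whether $\lnot\gamma \in \mathrm{Th}(C)$ and flip). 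Since $H^{=}(\sigma)$ is closed under negation, both reductions stay inside the relevant sentence class. Combining this equivalence with Theorem \ref{T5.1}(a) gives the proposition.

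The steps, in order, would be: (1) quote Theorem \ref{T5.1}(a) with $H = H^{=}(\sigma)$, reducing the claim to an equivalence between the satisfiability problem for $(H^{=}(\sigma) \wedge \Phi^{=}(P(\exists^{\ast}),\sigma), C)$ and the decidability of $\mathrm{Th}(C)$; (2) show the satisfiability problem for that compound class coincides with the satisfiability problem for $(H^{=}(\sigma), C)$, using that $H^{=}(\sigma)$ absorbs conjunction with existential-prenex sentences and that any sentence is equivalent to one in the compound class; (3) exhibit the two-way effective reduction between satisfiability over $C$ and non-membership in $\mathrm{Th}(C)$ via negation.

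I do not expect a serious obstacle here; this is essentially a bookkeeping corollary. The only mild subtlety is step (2): one should be a little careful that "the satisfiability problem for $(H_1, C)$ is decidable iff the satisfiability problem for $(H_2, C)$ is decidable" really does follow when $H_1$ and $H_2$ agree up to logical equivalence and membership in each is effectively witnessed — but here both directions are transparent because the syntactic transformations ($\alpha \mapsto \alpha \wedge \exists x_0 (x_0=x_0)$ and the trivial inclusion of $\alpha \wedge \beta$ into $H^{=}(\sigma)$) are computable and preserve truth in every structure. With that noted, the proof is a direct concatenation of Theorem \ref{T5.1}(a) and the negation trick.
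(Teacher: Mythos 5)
Your proof is correct and follows essentially the same route as the paper: apply Theorem \ref{T5.1}(a) with $H=H^{=}(\sigma)$, observe that the compound class $H^{=}(\sigma)\wedge\Phi^{=}(P(\exists^{\ast}),\sigma)$ amounts to all sentences (via the padding $\alpha\mapsto\alpha\wedge\exists x_{0}(x_{0}=x_{0})$), and convert between satisfiability over $C$ and membership in $\mathrm{Th}(C)$ by negation. The paper's proof is exactly this argument, merely written out separately for the two directions.
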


\begin{proof}
Let the theory $\mathrm{Th}(C)$ be decidable. Let $\alpha \in H^{=}(\sigma
)\wedge \Phi ^{=}(P(\exists ^{\ast }),\sigma )$. It is clear that the
sentence $\alpha $ is true in a structure from $C$ if and only if the
sentence $\lnot \alpha $ does not belong to the theory $\mathrm{Th}(C)$.
Taking into account that the theory $\mathrm{Th}(C)$ is decidable, we obtain
that the problem of satisfiability for the pair $(H^{=}(\sigma )\wedge \Phi
^{=}(P(\exists ^{\ast }),\sigma ),C)$ is decidable. Using Theorem \ref{T5.1}%
, we obtain that the problem of solvability for the quadruple $(\mathrm{Probl%
}^{=}(\sigma ),\mathrm{Tree}^{=}(\sigma ),H^{=}(\sigma ),C)$ is decidable.

Let the problem of solvability for the quadruple $(\mathrm{Probl}^{=}(\sigma
),\mathrm{Tree}^{=}(\sigma ),H^{=}(\sigma ),C)$ be decidable. Let $\alpha
\in H^{=}(\sigma )$. Evidently, $\alpha $ does not belong to the theory $%
\mathrm{Th}(C)$ if and only if the sentence $\lnot \alpha \wedge \exists
x_{0}(x_{0}=x_{0})$ from $H^{=}(\sigma )\wedge \Phi ^{=}(P(\exists ^{\ast
}),\sigma )$ is true in a structure from $C$. Using Theorem \ref{T5.1}, we
obtain that the problem of satisfiability for the pair $(H^{=}(\sigma
)\wedge \Phi ^{=}(P(\exists ^{\ast }),\sigma ),C)$ is decidable. Hence the
theory $\mathrm{Th}(C)$ is decidable.
\end{proof}

We now consider examples of classes of structures with decidable theory \cite%
{Ax68,ErshovLTT65}:

\begin{itemize}
\item The class of abelian groups and the class of finite abelian groups.

\item The class of finite fields.

\item The field of real numbers.

\item The field of complex numbers.

\item Addition of natural numbers.
\end{itemize}

We now consider examples of classes of structures with undecidable theory
\cite{ErshovLTT65,ErshovP11}:

\begin{itemize}
\item The class of groups and the class of finite groups.

\item The class of fields.

\item Arithmetic of natural numbers.
\end{itemize}

\section{Decidable Classes of Sentences\label{S5.5}}
In this section, we study decidable classes of sentences of a special kind. We first describe the details of our study.

Let $\sigma $ be a finite or countable signature that does not contain
nullary predicate symbols and nullary function symbols. A set $\Pi $ of
words in the alphabet $\{\forall ,\exists \}$ will be called \emph{closed
relative to subwords} if, for any word $w\in \Pi $, any word obtained from $%
w$ by the removal of some letters, belongs to $\Pi $. Let $\Pi _{1},\ldots
,\Pi _{n}$ be nonempty sets of words in the alphabet $\{\forall ,\exists \}$
closed relative to subwords. Let $\Pi _{n+1}=P(\exists ^{\ast })$. Denote $%
L^{=}=\Phi ^{=}(\Pi _{1},\sigma )\wedge \cdots \wedge \Phi ^{=}(\Pi
_{n},\sigma )$, $K^{=}=L^{=}\wedge \Phi ^{=}(\Pi _{n+1},\sigma )$, $L=\Phi
(\Pi _{1},\sigma )\wedge \cdots \wedge \Phi (\Pi _{n},\sigma )$, and $%
K=L\wedge \Phi (\Pi _{n+1},\sigma )$.

We denote by $C_{\sigma }$ the class of all structures of the signature $%
\sigma $ and by $C_{\sigma }^{fin}$ the class of all finite structures of
the signature $\sigma $. The next statement follows directly from Theorem %
\ref{T5.1}. It describes the connections between the solvability problems under consideration and the corresponding satisfiability problems.

\begin{proposition}
\label{P5.2}{\rm (a)} The problem of solvability for the quadruple $(%
\mathrm{Probl}^{=}(\sigma ),\mathrm{Tree}^{=}(\sigma ),L^{=},$ $C_{\sigma })$
is decidable if and only if the problem of satisfiability for the pair $%
(K^{=},C_{\sigma })$ is decidable.

{\rm (b)} The problem of solvability for the quadruple $(\mathrm{Probl}%
^{=}(\sigma ),\mathrm{Tree}^{=}(\sigma ),L^{=},C_{\sigma }^{fin})$ is
decidable if and only if the problem of satisfiability for the pair $%
(K^{=},C_{\sigma }^{fin})$ is decidable.

{\rm (c)} The problem of solvability for the quadruple $(\mathrm{Probl}%
(\sigma ),\mathrm{Tree}(\sigma ),L,C_{\sigma })$ is decidable if and only
if the problem of satisfiability for the pair $(K,C_{\sigma })$ is decidable.

{\rm (d)} The problem of solvability for the quadruple $(\mathrm{Probl}%
(\sigma ),\mathrm{Tree}(\sigma ),L,C_{\sigma }^{fin})$ is decidable if and
only if the problem of satisfiability for the pair $(K,C_{\sigma }^{fin})$
is decidable.
\end{proposition}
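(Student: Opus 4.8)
The plan is to obtain all four parts as immediate specializations of Theorem~\ref{T5.1}, so the work consists only in checking that theorem's hypotheses and in recognizing that the satisfiability classes it produces are exactly $K^{=}$ and $K$ as defined in this section. The point that makes the classes line up is the choice $\Pi_{n+1}=P(\exists^{\ast})$, so that the existential block appended inside Theorem~\ref{T5.1} is precisely the last conjunct of $K^{=}$, respectively $K$.

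First I would verify the nonemptiness conditions. Each $\Pi_{i}$, $1\le i\le n$, is nonempty and closed relative to subwords, so it contains the empty word and, unless $\Pi_{i}=\{\varepsilon\}$, one of the one-letter words $\forall$, $\exists$. In that case a sentence such as $\forall x_{0}(x_{0}=x_{0})$ or $\exists x_{0}(x_{0}=x_{0})$ lies in $\Phi^{=}(\Pi_{i},\sigma)$, and in the equality-free setting a sentence such as $\forall x_{0}(\lnot r(x_{0},\ldots,x_{0})\vee r(x_{0},\ldots,x_{0}))$, where $r$ is a predicate symbol of $\sigma$ (present whenever we work with $\mathrm{Probl}(\sigma),\mathrm{Tree}(\sigma)$), lies in $\Phi(\Pi_{i},\sigma)$; hence $L^{=}$ and $L$ are nonempty. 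Moreover $C_{\sigma}\neq\emptyset$, and $C_{\sigma}^{fin}\neq\emptyset$ since every one-element structure of $\sigma$ is finite. Finally $L^{=}\subseteq H^{=}(\sigma)$ is immediate, and $L\subseteq H(\sigma)$ holds because no sentence in $L$ uses equality.

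Next I would match the satisfiability classes. Since $H_{1}\wedge\cdots\wedge H_{m}$ denotes the set of conjunctions $\alpha_{1}\wedge\cdots\wedge\alpha_{m}$ with $\alpha_{i}\in H_{i}$, the class $L^{=}\wedge\Phi^{=}(P(\exists^{\ast}),\sigma)$ is exactly the set of sentences $(\alpha_{1}\wedge\cdots\wedge\alpha_{n})\wedge\gamma$ with $\alpha_{i}\in\Phi^{=}(\Pi_{i},\sigma)$ and $\gamma\in\Phi^{=}(P(\exists^{\ast}),\sigma)$; because $\Pi_{n+1}=P(\exists^{\ast})$ this is literally $K^{=}=L^{=}\wedge\Phi^{=}(\Pi_{n+1},\sigma)$. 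The identical computation gives $L\wedge\Phi(P(\exists^{\ast}),\sigma)=K$.

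With this in place the proof is a one-line invocation in each case. For (a) I apply Theorem~\ref{T5.1}(a) with $H=L^{=}$ and $C=C_{\sigma}$; for (b) the same with $C=C_{\sigma}^{fin}$; for (c) I apply Theorem~\ref{T5.1}(b) with $H=L$ and $C=C_{\sigma}$; for (d) the same with $C=C_{\sigma}^{fin}$. There is essentially no obstacle here beyond the bookkeeping; the only point needing care is the degenerate situation in which some $\Pi_{i}=\{\varepsilon\}$ while $\sigma$ has no nullary symbols, for then $\Phi^{=}(\Pi_{i},\sigma)$ (or $\Phi(\Pi_{i},\sigma)$) is empty, hence $L^{=}$ (or $L$) and the corresponding $K^{=}$ (or $K$) are empty, and one checks directly that both the solvability and the satisfiability problems are then vacuously decidable, so the asserted equivalence still holds.
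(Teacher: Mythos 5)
Your proof is correct and matches the paper's own treatment: the paper simply states that Proposition~\ref{P5.2} follows directly from Theorem~\ref{T5.1} with $H=L^{=}$ (resp.\ $L$) and $C=C_{\sigma}$ (resp.\ $C_{\sigma}^{fin}$), which is exactly your specialization after identifying $L^{=}\wedge\Phi^{=}(P(\exists^{\ast}),\sigma)$ with $K^{=}$ and $L\wedge\Phi(P(\exists^{\ast}),\sigma)$ with $K$. Your extra bookkeeping (nonemptiness, the degenerate case $\Pi_{i}=\{\varepsilon\}$ with no nullary symbols) is careful but not part of the paper's argument.
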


Let $H$ be a class of sentences of the signature $\sigma $.

\begin{definition}
We will say that the class $H$ is \emph{decidable} if the problem of
satisfiability for the pair $(H,C_{\sigma })$ is decidable and the problem
of satisfiability for the pair $(H,C_{\sigma }^{fin})$ is decidable.
\end{definition}
\index{Class of sentences!decidable class}

\begin{definition}
We will say that a sentence $\alpha $ of the signature $\sigma $ is \emph{%
satisfiable} if $\alpha $ is true in a structure from $C_{\sigma }$. We will
say that a sentence $\alpha $ of the signature $\sigma $ is \emph{finite
satisfiable} if $\alpha $ is true in a structure from $C_{\sigma }^{fin}$.
\end{definition}

\begin{definition}
We will say that the class $H$ is a \emph{class of reduction} if there
exists an algorithm that corresponds to an arbitrary sentence $\alpha $ of
the signature $\sigma $ a sentence $\alpha ^{\prime }\in H$ such that $%
\alpha $ is satisfiable if and only if $\alpha ^{\prime }$ is satisfiable,
and $\alpha $ is finite satisfiable if and only if $\alpha ^{\prime }$ is
finite satisfiable. Note that each class of reduction is undecidable.
\end{definition}
\index{Class of sentences!class of reduction}

In this section, based on the results from \cite{BorgerGG97,Gurevich66,Gurevich69,Gurevich76}, we describe all decidable classes of sentences $K$ and $K^=$.

\subsection{Signatures with Predicate Symbols Only}

In this section, we consider a finite or countable signature $\sigma $ that
does not contain function symbols and nullary predicate symbols.

Denote $P_{1}=P(\exists ^{\ast }\forall ^{\ast })$, $P_{2}=P(\exists ^{\ast
}\forall ^{2}\exists ^{\ast })$, and $P_{3}=P(\exists ^{\ast }\forall ^{2})$%
. In \cite{Gurevich66}, a constant $r_{0}$ is defined in the following way: $%
r_{0}=\lfloor \log _{2}m^{\ast }\rfloor +13$, where $m^{\ast }$ is the
number of states in a universal Turing machine.

\begin{theorem}
\label{T5.2}Let the signature $\sigma $ contain at least $r_{0}+1$ predicate
symbols and do not contain function symbols. Let $\Pi _{1},\ldots ,\Pi _{n}$
be nonempty sets of words in the alphabet $\{\forall ,\exists \}$ closed
relative to subwords, $\Pi _{n+1}=P(\exists ^{\ast })$, $K^{=}=\Phi ^{=}(\Pi
_{1},\sigma )\wedge \cdots \wedge \Phi ^{=}(\Pi _{n},\sigma )\wedge \Phi
^{=}(\Pi _{n+1},\sigma )$, and $K=\Phi (\Pi _{1},\sigma )\wedge \cdots
\wedge \Phi (\Pi _{n},\sigma )\wedge \Phi (\Pi _{n+1},\sigma )$.

{\rm (a)} Either the class $K$ is decidable or it is a class of
reduction. The same applies to the class $K^{=}$.

{\rm (b) }$K$ is a class of reduction if and only if $K^{=}$ is a class
of reduction.

{\rm (c)} If the class $K^{=}$ is decidable, then, with the exception of
a finite number of sentences, the satisfiability for sentences from $K^{=}$
coincides with the finite satisfiability.

{\rm (d)} $K$ is decidable if and only if at least one of the following
four possibilities occurs:

\hspace{5pt} {\rm (d.1)} $\sigma $ contains only $1$-ary predicate symbols.

\hspace{5pt} {\rm (d.2)} For $i=1,\ldots ,n$, $\Pi _{i}\subseteq P_{1}$.

\hspace{5pt} {\rm (d.3)} For $i=1,\ldots ,n$, $\Pi _{i}\subseteq P_{2}$.

\hspace{5pt} {\rm (d.4)} There exists $i_{0}\in \{1,\ldots ,n\}$ such that $\Pi
_{i_{0}}\subseteq P_{1}\cup P_{2}$ and, for each $i\in \{1,\ldots
,n\}\setminus \{i_{0}\}$, $\Pi _{i}\subseteq P_{3}$.
\end{theorem}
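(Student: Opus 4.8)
The plan is to obtain Theorem~\ref{T5.2} from the classical classification of the decidable prefix classes of first-order logic (the Bernays-Sch\"onfinkel-Ramsey class, the Ackermann class, the G\"odel class) and of the classical minimal reduction classes, as developed in \cite{Gurevich66,Gurevich69,Gurevich76,BorgerGG97}. The role of the hypothesis that $\sigma$ has at least $r_{0}+1$ predicate symbols is that $r_{0}$ is exactly the number of symbols needed to carry out the Turing machine encodings of \cite{Gurevich66}, so that one symbol to spare makes every prefix class that is a reduction class in the classical sense already a reduction class over our fixed $\sigma$. The proof is a case analysis driven by the shapes of $\Pi_{1},\dots,\Pi_{n}$; throughout, the standing conjunct $\Phi(\Pi_{n+1},\sigma)$ with $\Pi_{n+1}=P(\exists^{\ast})$ is harmless, since $P(\exists^{\ast})\subseteq P_{3}\subseteq P_{2}\subseteq P_{1}\cup P_{2}$, so it never spoils any of (d.1)-(d.4), and its existential quantifiers can always be absorbed when prenexing.

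\emph{Decidability in the cases (d.1)-(d.4).} If (d.1) holds, then $K$ and $K^{=}$ lie inside monadic first-order logic, which is decidable even with equality. If (d.2) holds, then, after renaming the bound variables of the conjuncts apart and pushing quantifiers out, every sentence of $K$ (respectively $K^{=}$) is logically equivalent to one sentence of the Bernays-Sch\"onfinkel-Ramsey class $[\exists^{\ast}\forall^{\ast}]$, which is decidable and has the finite model property. If (d.3) holds, one first strips the leading existential quantifiers and the purely existential conjuncts and is left with a ``separated'' conjunction of G\"odel class sentences $[\exists^{\ast}\forall^{2}\exists^{\ast}]$ whose matrices involve pairwise disjoint variable blocks; the decidability of this family, with a computable bound on the size of a smallest model, is supplied by the analysis of standard classes in \cite{Gurevich76,BorgerGG97}. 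Case (d.4) combines the two: the conjuncts whose prefixes lie in $P_{3}=P(\exists^{\ast}\forall^{2})$ carry no trailing existential quantifiers, so after pulling out the existential quantifiers their conjunction (together with $\Phi(\Pi_{n+1},\sigma)$) collapses into one $[\exists^{\ast}\forall^{\ast}]$ sentence, and the whole sentence becomes a conjunction of one G\"odel type sentence with one Bernays-Sch\"onfinkel-Ramsey sentence, again a decidable standard class. In each of the four cases the governing classical class admits a computable bound on the size of a smallest model, which yields part (c): outside a finite list of sentences, satisfiability over $C_{\sigma}$ and satisfiability over $C_{\sigma}^{fin}$ coincide for sentences of $K^{=}$.

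\emph{The reduction class case.} Suppose none of (d.1)-(d.4) holds. Then $\sigma$ has a predicate of arity at least $2$, and, using that each $\Pi_{i}$ is closed under subwords, one checks that the failure of (d.2), (d.3), and (d.4) forces one of finitely many ``critical'' configurations of the $\Pi_{i}$; in each of them one exhibits among the sentences of $K$ (hence of $K^{=}$) a copy --- up to padding with dummy quantifiers --- of one of the classical minimal reduction classes, for instance $[\forall^{3}\exists]$, $[\forall\exists\forall]$, or the unseparated $[\forall^{2}\exists\forall^{2}\exists]$ produced by two genuine G\"odel conjuncts. Feeding these into the conservative (finite-satisfiability-preserving) Turing machine reductions of \cite{Gurevich66,Gurevich69} shows that $K$ is a class of reduction. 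This settles the remaining implication of (d), and together with the previous paragraph it gives part (a) for $K$; the same case analysis applied to $K^{=}$ gives part (a) for $K^{=}$. For part (b), the implication ``$K$ a reduction class $\Rightarrow K^{=}$ a reduction class'' is immediate because $K\subseteq K^{=}$; for the converse one removes equality, replacing $=$ by a fresh binary predicate together with the congruence axioms, whose prefixes fit because, $K^{=}$ being a reduction class, none of (d.1)-(d.4) holds and the $\Pi_{i}$ are large enough to contain them, thereby turning a reduction into $K^{=}$ into a reduction into $K$ that still preserves finite satisfiability.

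The step I expect to be the main obstacle is the interface between our \emph{conjunctions} of prefix classes and the \emph{single} prefix standard classes of the classical theory. One must verify that the collapses in (d.2) and (d.4) really land in decidable classes; that a \emph{separated} conjunction of G\"odel sentences stays decidable, while dropping the separation (or adjoining equality) makes it a reduction class, so that the frontier is exactly (d.1)-(d.4); and that the equality elimination in part (b), as well as every Turing machine reduction used above, respects both the prefix restrictions and finite satisfiability while fitting inside the available $r_{0}+1$ predicate symbols. Determining which joint configurations of the subword-closed sets $\Pi_{1},\dots,\Pi_{n}$ fall on each side of this line is the combinatorial heart of the argument.
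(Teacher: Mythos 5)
Your proposal reproduces the outer shell of the argument but leaves its core unresolved, and you say so yourself: the ``interface between conjunctions of prefix classes and single prefix standard classes'' is not a residual verification, it is the theorem. The paper does not rederive this from the classical single-prefix classification; it invokes Gurevich's 1966 results (Theorems 3, 19 and 21 of \cite{Gurevich66}), which are stated directly for conjunctive classes $\Phi (\Pi _{1},\sigma )\wedge \cdots \wedge \Phi (\Pi _{n},\sigma )$ and give (i) decidability of the relevant conjunctive class built from $P_{1}\cup P_{2}$, and (ii) an exact criterion for when such a conjunction is a reduction class (some $\Phi (\Pi _{i},\sigma )$ is itself a reduction class, or $\forall \exists \in \Pi _{i}$ and $\forall ^{3}\in \Pi _{j}$ for distinct $i,j$). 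Your case (d.3) illustrates why this cannot be delegated to the standard classes: a conjunction of two G\"odel sentences prenexes to a $\forall ^{4}\exists ^{\ast }$ sentence, which lies in a reduction class in general and is decidable only because of the separated matrix --- that is exactly Gurevich's Theorem 3, not ``the analysis of standard classes in \cite{Gurevich76,BorgerGG97}.'' Statements (a), (b), (c) likewise follow in the paper directly from Theorems 19 and 21; what the paper actually proves by hand is the purely combinatorial step you defer at the end: classifying the subword-closed sets $\Pi _{i}\subseteq P_{1}\cup P_{2}$ according to whether they contain $\forall ^{3}$ and/or $\forall \exists $, and checking that the admissible joint configurations are precisely (d.1)--(d.4). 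A secondary device you miss is the treatment of finite signatures: for the ``if'' direction the paper embeds $\sigma $ into an infinite signature by adding unary predicates and uses $K\subseteq K(\sigma _{1})$, while the hypothesis of $r_{0}+1$ symbols is used only for the ``only if'' direction via Theorem 21.

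Two of your concrete steps are also incorrect as written. For (b), eliminating equality by a fresh binary predicate plus congruence axioms requires placing $\forall ^{3}$ (transitivity) and $\forall ^{2k}$ (substitution) prefixes inside some $\Pi _{i}$; the failure of (d.1)--(d.4) does not guarantee this (e.g.\ $\Pi _{1}=P(\forall \exists \forall )$ makes $K$ a reduction class yet contains no $\forall ^{3}$), so your reduction from $K^{=}$ to $K$ does not type-check against the prefix constraints. For (c), a computable bound on the size of a smallest model would give coincidence of satisfiability and finite satisfiability with \emph{no} exceptions; the ``finite number of exceptions'' in the statement reflects the finitely many infinity axioms that survive in the decidable equality case, which again comes out of Gurevich's Theorem 21 rather than a finite-model-property argument.
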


\begin{proof}
Statements (a), (b) and (c) follow directly from Theorems 19 and 21 \cite%
{Gurevich66}.

Let $\sigma $ be an infinite signature. From Theorem 19 \cite{Gurevich66} it
follows that $K$ is a class of reduction if and only if the signature $%
\sigma $ contains at least one predicate symbol, which arity is at least
two, and at least one of the following two possibilities occurs:

\begin{itemize}
\item For some $i\in \{1,\ldots ,n+1\}$, $\Phi (\Pi _{i},\sigma )$ is a
reduction class.

\item For some $i,j\in \{1,\ldots ,n\}$ such that $i\neq j$, $\forall
\exists \in \Pi _{i}$ and $\forall ^{3}\in \Pi _{j}$.
\end{itemize}

From Theorem 3 \cite{Gurevich66} it follows that the class $\Phi (P_{1}\cup
P_{2},\sigma )$ is decidable. Therefore if, for the class $K$, at least one
of the possibilities (d.1), (d.2), (d.3), (d.4) occurs, then $K$ is
decidable.

Let $K$ be decidable. We now show that $K$ satisfies at least one of the
conditions (d.1), (d.2), (d.3), and (d.4).

If $\sigma $ contains only $1$-ary predicate symbols, then $K$ satisfies the
condition (d.1). Let $\sigma $ contain at least one predicate symbol $F$,
which arity is at least two. Then, for any $i\in \{1,\ldots ,n+1\}$, $\Phi
(\Pi _{i},\sigma )$ is a decidable class, and there are no $i,j\in
\{1,\ldots ,n\}$ such that $i\neq j$, $\forall \exists \in \Pi _{i}$ and $%
\forall ^{3}\in \Pi _{j}$. Taking into account that the signature $\sigma $
is infinite and contains the predicate symbol $F$, and using Theorem 3 \cite%
{Gurevich66}, we obtain that, for any $i\in \{1,\ldots ,n\}$, $\Pi
_{i}\subseteq P_{1}\cup P_{2}$.

A set of words $\Pi $ in the alphabet $\{\forall ,\exists \}$, which is
closed relative to subwords, will be called

\begin{itemize}
\item $\emptyset $\emph{-set} if $\Pi \subseteq P_{1}\cup P_{2}$, $\forall
\exists \notin \Pi $ and $\forall ^{3}\notin \Pi $.

\item $\forall ^{3}$\emph{-set} if $\Pi \subseteq P_{1}\cup P_{2}$, $\forall
\exists \notin \Pi $ and $\forall ^{3}\in \Pi $.

\item $\forall \exists $\emph{-set }if $\Pi \subseteq P_{1}\cup P_{2}$, $%
\forall \exists \in \Pi $ and $\forall ^{3}\notin \Pi $.

\item $\forall ^{3}$-$\forall \exists $\emph{-set} if $\Pi \subseteq
P_{1}\cup P_{2}$, $\forall \exists \in \Pi $ and $\forall ^{3}\in \Pi $.
\end{itemize}

Evidently, if $\Pi $ is a $\emptyset $-set, then $\Pi \subseteq P(\exists
^{\ast }\forall ^{2})=P_{3}$. If $\Pi $ is a $\forall ^{3}$-set, then $\Pi
\subseteq P_{1}$. If $\Pi $ is a $\forall \exists $-set, then $\Pi \subseteq
P_{2}$. If $\Pi $ is a $\forall ^{3}$-$\forall \exists $-set, then $\Pi
\subseteq P_{1}\cup P_{2}$. Note that $P_{3}=P_{1}\cap P_{2}$.

Let there exist $i_{0}$ such that $\Pi _{i_{0}}$ is a $\forall ^{3}$-$%
\forall \exists $-set. Then, for any $i\in \{1,\ldots ,n\}\setminus
\{i_{0}\} $, $\Pi _{i}$ is a $\emptyset $-set and $\Pi \subseteq P_{3}$.
Therefore $K$ satisfies the condition (d.4).

Let, for any $i\in \{1,\ldots ,n\}$, $\Pi _{i}$ be not a $\forall ^{3}$-$%
\forall \exists $-set.

If, for any $i\in \{1,\ldots ,n\}$, $\Pi _{i}$ is a $\emptyset $-set, then $%
K $ satisfies the conditions (d.2) and (d.3).

Let there exist $i\in \{1,\ldots ,n\}$ such that $\Pi _{i}$ is a $\forall
^{3}$-set. Then, evidently, for any $j\in \{1,\ldots ,n\}$, $\Pi _{j}$ is a $%
\emptyset $-set or a $\forall ^{3}$-set and $K$ satisfies the condition
(d.2).

Let there exist $i\in \{1,\ldots ,n\}$ such that $\Pi _{i}$ is a $\forall
\exists $-set. Then, evidently, for any $j\in \{1,\ldots ,n\}$, $\Pi _{j}$
is a $\emptyset $-set or a $\forall \exists $-set and $K$ satisfies the
condition (d.3).

Let $\sigma $ be a finite signature. Let the class $K$ satisfy at least one
of the conditions (d.1), (d.2), (d.3), (d.4). We add to $\sigma $ an
infinite set of $1$-ary predicate symbols. As a result, we obtain a
signature $\sigma _{1}$. It is clear that the class $K(\sigma _{1})=\Phi
(\Pi _{1},\sigma _{1})\wedge \cdots \wedge \Phi (\Pi _{n+1},\sigma _{1})$
satisfies at least one of the conditions (d.1), (d.2), (d.3), (d.4).
According to what was proven above, the class $K(\sigma _{1})$ is decidable.
Since $K\subseteq K(\sigma _{1})$, the class $K$ is decidable.

Let $\sigma $ contain at least $r_{0}+1$ predicate symbols. Let $K$ be
decidable. If $\sigma $ contains only $1$-ary predicate symbols, then $K$
satisfies the condition (d.1). Let $\sigma $ contain at least one predicate
symbol, which arity is greater than or equal to two. Taking into account
that $\Pi _{n+1}=P(\exists ^{\ast })$ and using Theorem 21 \cite{Gurevich66}%
, we obtain that, for any $i\in \{1,\ldots ,n\}$, $\forall \exists \forall
\notin \Pi _{i}\ $ and $\forall ^{3}\exists \notin \Pi _{i}$. Therefore, for
$i=1,\ldots ,n$, $\Pi _{i}\subseteq P_{1}\cup P_{2}$.

Taking into account that $\Pi _{n+1}=P(\exists ^{\ast })$ and using Theorem
21 \cite{Gurevich66}, we obtain that there are no $i,j\in \{1,\ldots ,n\}$
such that $i\neq j$, $\forall \exists \in \Pi _{i}$ and $\forall ^{3}\in \Pi
_{j}$.

Next, using arguments similar to the case of infinite signature $\sigma $,
where we considered $\emptyset $-sets, $\forall ^{3}$-sets, $\forall \exists
$-sets, and $\forall ^{3}$-$\forall \exists $-sets, we obtain that $K$
satisfies condition (d.2) or condition (d.3), or condition (d.4).
\end{proof}

\begin{remark}
If the signature $\sigma $ contains less than $r_{0}+1$ predicate symbols
and $K$ satisfies at least one of the conditions (d.1), (d.2), (d.3), (d.4),
then $K$ is decidable -- see the proof of the theorem.
\end{remark}

\subsection{Signatures with Function Symbols. Equality Is Not Allowed}

In this section, we consider a finite or countable signature $\sigma $ that
does not contain nullary predicate and function symbols and contains a
function symbol of arity greater than $0$. The equality $=$ can not be used
in the formulas of the signature $\sigma $.

\begin{theorem}
\label{T5.3}Let $\sigma $ be a signature with at least one function symbol
of arity greater than $0$. Let $\Pi _{1},\ldots ,\Pi _{n}$ be nonempty sets
of words in the alphabet $\{\forall ,\exists \}$ closed relative to
subwords, $\Pi _{n+1}=P(\exists ^{\ast })$, and $K=\Phi (\Pi _{1},\sigma
)\wedge \cdots \wedge \Phi (\Pi _{n},\sigma )\wedge \Phi (\Pi _{n+1},\sigma
) $.

{\rm (a)} Either the class $K$ is decidable or it is a class of reduction.

{\rm (b)} If the class $K$ is decidable, then for sentences from $K$, the
satisfiability coincides with the finite satisfiability.

{\rm (c)} The class $K$ is decidable if and only if at least one of the
following three conditions is satisfied:

\hspace{5pt} {\rm (c.1) }The signature $\sigma $ does not contain predicate symbols.

\hspace{5pt} {\rm (c.2)} The signature $\sigma $ contains only $1$-ary predicate and $%
1 $-ary function symbols.

\hspace{5pt} {\rm (c.3)} For $i=1,\ldots ,n$, $\Pi _{i}\subseteq P(\exists ^{\ast
}\forall \exists ^{\ast })$.
\end{theorem}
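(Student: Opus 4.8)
The plan is to mirror the proof of Theorem~\ref{T5.2}, replacing the classical classification of prefix classes over purely relational signatures by the corresponding classification for signatures with function symbols and without equality, which is due to Gurevich and is collected in \cite{BorgerGG97} (see also \cite{Gurevich69,Gurevich76}). Concretely, I would invoke four classical inputs: (i) the dichotomy that every class $\Phi(\Pi_{1},\sigma)\wedge\cdots\wedge\Phi(\Pi_{n+1},\sigma)$ of sentences without equality over such a $\sigma$ is either decidable with the finite model property or a class of reduction; (ii) the decidability, with the finite model property, of the class of sentences without equality in prenex form with prefix in $P(\exists^{\ast}\forall\exists^{\ast})$ over an arbitrary signature with function symbols, together with its closure under finite conjunctions (the Gurevich class and its $\wedge$-closure); (iii) the decidability, with the finite model property, of the class of all sentences without equality over a signature all of whose predicate and function symbols are unary; and (iv) the fact that the class of sentences without equality in prenex form with prefix $\forall^{2}$ is a class of reduction already over a signature containing a predicate symbol, a function symbol of positive arity, and at least one symbol of arity $\ge 2$. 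Given these, statement (a) is (i) applied to $K$, and (b) is the finite-model-property half of (i) together with the trivial implication ``finitely satisfiable $\Rightarrow$ satisfiable''.

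For the sufficiency direction of (c) I would show that each of (c.1)--(c.3) places $K$ inside one of the decidable classes of (ii) or (iii). Under (c.1), $\sigma$ has no predicate symbols, so with equality forbidden there are no atomic formulas over $\sigma$ and $K$ contains no sentences at all; both satisfiability problems are then vacuously decidable. Under (c.2), $K$ is contained in the class of (iii), hence decidable. Under (c.3), every $\Pi_{i}$ with $i\le n$ lies in $P(\exists^{\ast}\forall\exists^{\ast})$, and so does $\Pi_{n+1}=P(\exists^{\ast})$; thus every conjunct of every sentence of $K$ has a prefix with at most one $\forall$, so $K$ is contained in the $\wedge$-closure of the Gurevich class and is decidable by (ii). For a finite signature $\sigma$ I would repeat the device used in the proof of Theorem~\ref{T5.2}: pass to the signature $\sigma_{1}$ obtained by adjoining countably many fresh unary predicate symbols, observe that $K\subseteq K(\sigma_{1})$ and that $K(\sigma_{1})$ still satisfies the same one of (c.1)--(c.3) over $\sigma_{1}$, conclude by the countable case that $K(\sigma_{1})$ is decidable, and transfer decidability back to $K$.

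For the necessity direction I would argue the contrapositive, exploiting that each $\Pi_{i}$ is closed relative to subwords. Suppose none of (c.1)--(c.3) holds. From the failure of (c.1), $\sigma$ has a predicate symbol; from the failure of (c.2) together with the standing assumption that $\sigma$ has a function symbol of positive arity, $\sigma$ has a predicate symbol of arity $\ge 2$ or a function symbol of arity $\ge 2$; from the failure of (c.3) there is $i_{0}\in\{1,\dots,n\}$ with $\Pi_{i_{0}}\not\subseteq P(\exists^{\ast}\forall\exists^{\ast})$, i.e.\ $\Pi_{i_{0}}$ contains a word with at least two occurrences of $\forall$, and hence $\forall^{2}\in\Pi_{i_{0}}$ by closure under subwords, so $\Phi(\{\forall^{2}\},\sigma)\subseteq\Phi(\Pi_{i_{0}},\sigma)$. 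Choosing, for each $i\in\{1,\dots,n+1\}\setminus\{i_{0}\}$, a fixed valid sentence $\theta_{i}\in\Phi(\Pi_{i},\sigma)$, the map $\alpha\mapsto\alpha\wedge\bigwedge_{i\ne i_{0}}\theta_{i}$ carries $\Phi(\{\forall^{2}\},\sigma)$ into $K$ and preserves both satisfiability and finite satisfiability; composed with the conservative reduction of arbitrary sentences into $\Phi(\{\forall^{2}\},\sigma)$ furnished by (iv), this exhibits $K$ as a class of reduction, hence undecidable.

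The main obstacle is not the combinatorics---which, as in Theorem~\ref{T5.2}, is routine once closure under subwords is used---but locating and aligning the precise classical inputs (ii)--(iv): in particular pinning down the ``minimal'' $\forall^{2}$ reduction classes with function symbols and verifying that the reductions involved are conservative, i.e.\ respect finite satisfiability, so that (b) follows with no exceptional sentences (unlike the relational case in Theorem~\ref{T5.2}). One must also dispose of a few degenerate sub-cases by hand, notably the finite-signature case handled by the enlargement above and signatures without constants, for which $\Phi(\{\Lambda\},\sigma)=\emptyset$ and the quantifier-free part of $K$ must be treated separately.
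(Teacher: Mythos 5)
Your overall architecture matches the paper's proof almost exactly: sufficiency by embedding $K$ into known decidable classes (the empty class under (c.1), the all-unary class under (c.2), a single Gurevich-class sentence under (c.3)), and necessity by using closure under subwords to extract $\forall^{2}\in\Pi_{i_{0}}$, padding the remaining conjuncts with valid sentences $\pi_{i}$, and reducing from a $\forall^{2}$ reduction class over a signature with a symbol of arity $\ge 2$ (the paper works from Theorem~7 of \cite{Gurevich69} over the two minimal signatures $\{\rho^{(2)},\varphi^{(1)}\}$ and $\{\rho^{(1)},\varphi^{(2)}\}$ and pads arguments, which is what your input (iv) amounts to).

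The one place where you have hidden the actual work is your ``classical input (ii)'': the decidability of the \emph{$\wedge$-closure} of the Gurevich class is not an off-the-shelf result, and it is precisely the technical content of the (c.3) case in the paper. A conjunction of $\exists^{\ast}\forall\exists^{\ast}$ sentences is not in that prefix class, and you must show how to get back into it: the paper does this by replacing the leading existential variables of each conjunct with fresh constants $c_{1},\ldots,c_{k}$, renaming so that each conjunct becomes $\forall x_{0}\exists\cdots\exists u_{i}$ with disjoint bound variables, merging under the single universal quantifier to obtain $\forall x_{0}\exists\cdots\exists u$, and finally re-quantifying the constants existentially in front ($\exists y_{1}\ldots\exists y_{k}\forall x_{0}\exists\cdots\exists v$), which lands in $\Phi(P(\exists^{\ast}\forall\exists^{\ast}),\sigma)$ while preserving truth in every structure. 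You should supply this argument rather than cite it. Two smaller points: the signature-enlargement device you import from Theorem~\ref{T5.2} is not needed here (the sufficiency arguments work for any $\sigma$ directly), and as stated it is wrong in the (c.1) case, since adjoining unary predicate symbols destroys condition (c.1); that case is anyway disposed of by $K=\emptyset$.
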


\begin{proof}
Let $K$ satisfy the condition (c.1). Then $K=\emptyset $ and, consequently, $%
K$ is decidable. Since $K=\emptyset $, the statement (b) holds for $K$.

Let $K$ satisfy the condition (c.2). By Theorem 7 \cite{Gurevich69}, the
class $\Phi (P,\sigma )$, where $P$ is the set of all words in the alphabet $%
\{\forall ,\exists \}$, is decidable. Evidently, there exists an algorithm
that transforms any sentence $\alpha \in K$ to a sentence $\alpha ^{\prime
}\in \Phi (P,\sigma )$ such that $\alpha ^{\prime }$ is satisfiable if and
only if $\alpha $ is satisfiable and $\alpha ^{\prime }$ is finite
satisfiable if and only if $\alpha $ is finite satisfiable. Therefore $K$ is
decidable. By Theorem 7 \cite{Gurevich69}, the statement (b) holds for $\Phi
(P,\sigma )$. Therefore the statement (b) holds for $K$.

Let $K$ satisfy the condition (c.3). Let $\alpha =\alpha _{1}\wedge \cdots
\wedge \alpha _{n+1}\in K$. By introducing some new nullary function symbols
$c_{1},\ldots ,c_{k}$ and changing variables, we can transform the sentence $%
\alpha $ to the form $\beta =\forall x_{0}\exists \ldots \exists u_{1}\wedge
\cdots \wedge \forall x_{0}\exists \ldots \exists u_{n+1}$ such that $%
u_{1},\ldots ,u_{n+1}$ are quantifier-free formulas of the signature $\sigma
^{\prime }=\sigma \cup \{c_{1},\ldots ,c_{k}\}$ with pairwise disjoint sets
of variables (except $x_{0}$) and, for any structure $U$ of the signature $%
\sigma $, $\alpha $ is true in $U$ if and only if $\beta $ is true in $U$
for some interpretation of new symbols $c_{1},\ldots ,c_{k}$ in the universe
of $U$. Denote $\gamma =\forall x_{0}(\exists \ldots \exists u_{1}\wedge
\cdots \wedge \exists \ldots \exists u_{n+1})$. Evidently, for any structure
$U^{\prime }$ of the signature $\sigma ^{\prime }$, $\gamma $ is true in $%
U^{\prime }$ if and only if $\beta $ is true in $U^{\prime }$. Further, we
can transform the sentence $\gamma $ to the form $\delta =\forall
x_{0}\exists \ldots \exists u$ such that $u$ is a quantifier-free formula of
the signature $\sigma ^{\prime }$ and, for any structure $U^{\prime }$ of
the signature $\sigma ^{\prime }$, $\delta $ is true in $U^{\prime }$ if and
only if $\gamma $ is true in $U^{\prime }$.

Let $y_{1},\ldots ,y_{k}$ be variables from $X$ that do not belong to $%
\delta $. Denote $\varepsilon =\exists y_{1}\ldots \exists y_{k}\forall
x_{0}\exists \ldots $ $\exists v$, where $v$ is obtained from $u$ by replacing
symbols $c_{1},\ldots ,c_{k}$ with variables $y_{1},\ldots ,y_{k}$,
respectively. One can show that, for any structure $U$ of the signature $%
\sigma $, $\varepsilon $ is true in $U$ if and only if $\delta $ is true in $%
U$ for some interpretation of new symbols $c_{1},\ldots ,c_{k}$ in the
universe of $U$. Therefore, for any structure $U$ of the signature $\sigma $%
, $\varepsilon $ is true in $U$ if and only if $\alpha $ is true in $U$. It
is clear that $\varepsilon \in \Phi (P(\exists ^{\ast }\forall \exists
^{\ast }),\sigma )$. Using Theorem 7 \cite{Gurevich69}, we obtain that the
class $\Phi (P(\exists ^{\ast }\forall \exists ^{\ast }),\sigma )$ is
decidable. Therefore the class $K$ is decidable.
By Theorem 7 \cite{Gurevich69}, the statement (b) holds for $\Phi (P(\exists ^{\ast }\forall \exists ^{\ast }),\sigma )$. Therefore the statement (b) holds for $K$.

Let $K$ do not satisfy any of the conditions (c.1), (c.2), and (c.3). Then $%
\sigma $ contains a predicate symbol $p$ and a function symbol $f$ such
that, for at least one of these symbols, its arity is at least $2$, and
there exists $i\in \{1,\ldots ,n\}$ such that $\Pi _{i}\nsubseteq P(\exists
^{\ast }\forall \exists ^{\ast })$. Let, for the definiteness, $i=1$. Since $%
\Pi _{1}\nsubseteq P(\exists ^{\ast }\forall \exists ^{\ast })$, the word $%
\forall ^{2}$ belongs to $\Pi _{1}$.

We now show that, for $i=2,\ldots ,n+1$, the set $\Phi (\Pi _{i},\sigma )$
contains a sentence that is true in any structure $U$ of the signature $%
\sigma $. Let $i\in \{2,\ldots ,n+1\}$ and $Q_{0}\ldots Q_{t}\in \Pi _{i}$,
where $Q_{0},\ldots ,Q_{t}\in \{\exists ,\forall \}$. Then the sentence $\pi
_{i}=Q_{0}x_{0}\ldots Q_{t}x_{t}(p(x_{0},\ldots ,x_{0})\vee \lnot
p(x_{0},\ldots ,x_{0}))$ is true in any structure $U$ of the signature $%
\sigma $ and $\pi _{i}\in \Phi (\Pi _{i},\sigma )$.

Let the arity of the symbol $p$ be greater than or equal to two. From
Theorem 7 \cite{Gurevich69} it follows that the class $\Phi (P(\forall
^{2}),\sigma _{1})$, where $\sigma _{1}$ consists of the 2-ary predicate
symbol $\rho $ and the 1-ary function symbol $\varphi $, is a class of
reduction. Let us show that there is an algorithm, which, for an arbitrary
sentence $\alpha \in \Phi (P(\forall ^{2}),\sigma _{1})$, constructs a
sentence $\alpha ^{\prime }\in K$ such that $\alpha ^{\prime }$ is
satisfiable if and only if $\alpha $ is satisfiable and $\alpha ^{\prime }$
is finite satisfiable if and only if $\alpha $ is finite satisfiable.

Let us replace in $\alpha $ every occurrence of $\rho (t_{1},t_{2})$ by $%
p(t_{1},t_{2},\ldots ,t_{2})$ and every occurrence of $\varphi (t)$ by $%
f(t,t,\ldots ,t)$, where $t_{1}$, $t_{2}$, and $t$ are arbitrary terms of
the signature $\sigma $. Denote by $\beta $ the obtained formula. Then as $%
\alpha ^{\prime }$ we can take the sentence $\beta \wedge \pi _{2}\wedge
\cdots \wedge \pi _{n+1}$, which belongs to $K$. Since $\Phi (P(\forall
^{2}),\sigma _{1})$ is a class of reduction, $K$ is also a class of
reduction.

Let the arity of the symbol $f$ be greater than or equal to two. From
Theorem 7 \cite{Gurevich69} it follows that the class $\Phi (P(\forall
^{2}),\sigma _{2})$, where $\sigma _{2}$ consists of the 1-ary predicate
symbol $\rho $ and the 2-ary function symbol $\varphi $, is a class of
reduction. Let us show that there is an algorithm, which, for an arbitrary
sentence $\alpha \in \Phi (P(\forall ^{2}),\sigma _{2})$, constructs a
sentence $\alpha ^{\prime }\in K$ such that $\alpha ^{\prime }$ is
satisfiable if and only if $\alpha $ is satisfiable and $\alpha ^{\prime }$
is finite satisfiable if and only if $\alpha $ is finite satisfiable.

Let us replace in $\alpha $ every occurrence of $\rho (t)$ by $p(t,t,\ldots
,t)$ and every occurrence of $\varphi (t_{1},t_{2})$ by $f(t_{1},t_{2},%
\ldots ,t_{2})$, where $t$, $t_{1}$, and $t_{2}$ are arbitrary terms of the
signature $\sigma $. Denote by $\beta $ the obtained formula. Then as $%
\alpha ^{\prime }$ we can take the sentence $\beta \wedge \pi _{2}\wedge
\cdots \wedge \pi _{n+1}$, which belongs to $K$. Since $\Phi (P(\forall
^{2}),\sigma _{2})$ is a class of reduction, $K$ is also a class of
reduction. This completes the proofs of statements (a), (b) and (c).
\end{proof}

\subsection{Signatures with Function Symbols. Equality Is Allowed%
}

\begin{theorem}
\label{T5.4}Let $\sigma $ be a signature with at least one function symbol.
Let $\Pi _{1},\ldots ,\Pi _{n}$ be nonempty sets of words in the alphabet $%
\{\forall ,\exists \}$ closed relative to subwords, $\Pi _{n+1}=P(\exists
^{\ast })$, and $K^{=}=\Phi ^{=}(\Pi _{1},\sigma )\wedge \cdots \wedge \Phi
^{=}(\Pi _{n},\sigma )\wedge \Phi ^{=}(\Pi _{n+1},\sigma )$.

{\rm (a)} Either the class $K^{=}$ is decidable or it is a class of
reduction.

{\rm (b)} The class $K^{=}$ is decidable if and only if at least one of
the following three conditions is satisfied:

\hspace{5pt} {\rm (b.1)} For $i=1,\ldots ,n$, $\Pi _{i}\subseteq P(\exists ^{\ast })$.

\hspace{5pt} {\rm (b.2)} The signature $\sigma $ contains only $1$-ary predicate
symbols, at most one $1$-ary function symbol, and does not contain function
symbols with arity greater than one.

\hspace{5pt} {\rm (b.3)} For $i=1,\ldots ,n$, $\Pi _{i}\subseteq P(\exists ^{\ast
}\forall \exists ^{\ast })$, the signature $\sigma $ contains at most one $1$%
-arity function symbol and does not contain function symbols with arity
greater than one. There are no any restrictions regarding predicate symbols.
\end{theorem}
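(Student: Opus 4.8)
The plan is to mirror the structure of the proofs of Theorems~\ref{T5.2} and \ref{T5.3}: the dichotomy (a) will come by quoting the classification results for prenex classes with equality from \cite{Gurevich76}, and the characterization (b) will be proved by showing separately that each of (b.1), (b.2), (b.3) forces decidability (the easy direction) and that violating all three forces $K^{=}$ to be a class of reduction (the hard direction).

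First I would dispose of the easy direction. If (b.1) holds, then every conjunct $\alpha_i$ with $i\le n$ is (logically equivalent to) a purely existential sentence, so $\alpha_1\wedge\cdots\wedge\alpha_{n+1}$ is again in $\Phi^{=}(P(\exists^{\ast}),\sigma)$ up to renaming of bound variables, and by the decidability of the existential (Bernays--Sch\"onfinkel with functions and equality is not decidable in general, so here I would instead invoke that the purely existential fragment with equality and functions is decidable --- this is the Gurevich result) we get decidability of $K^{=}$; statement (b) of the corresponding theorem gives the match of satisfiability and finite satisfiability, which I would also need to track, though Theorem~\ref{T5.4} as stated omits part (b), so I only need decidability. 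If (b.2) holds, then $\sigma$ is essentially a monadic signature with a single unary function, a case that is known to be decidable by \cite{Gurevich69} (the L\"ob--Gurevich result on monadic second-order-type reductions), and I would produce an algorithm transforming any $\alpha\in K^{=}$ to a sentence in the full prenex class over that signature, preserving satisfiability and finite satisfiability, exactly as in the (c.2) case of Theorem~\ref{T5.3}. If (b.3) holds, I would repeat the term-manipulation argument from the (c.3) case of Theorem~\ref{T5.3}: introduce fresh constants for the inner existential blocks, push all conjuncts under a common $\forall x_0$, collapse to a single $\forall\exists^{\ast}$ sentence $\delta$ over $\sigma\cup\{c_1,\dots,c_k\}$, then re-existentially-quantify the constants to land in $\Phi^{=}(P(\exists^{\ast}\forall\exists^{\ast}),\sigma)$, which is decidable by the corresponding result of \cite{Gurevich76} for the $\exists^{\ast}\forall\exists^{\ast}$ class with equality and one unary function.

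The hard direction is showing that if none of (b.1), (b.2), (b.3) holds then $K^{=}$ is a class of reduction. Failure of (b.1) means some $\Pi_{i_0}$ contains a word with a $\forall$, hence (being closed under subwords and $\Pi_{i_0}\ne\emptyset$) contains $\forall$ and, if not contained in $P(\exists^{\ast}\forall\exists^{\ast})$ by failure of (b.3), contains $\forall^{2}$ or $\forall\exists\forall$. Failure of (b.2) means $\sigma$ has a predicate symbol of arity $\ge 2$, or two unary function symbols, or a function symbol of arity $\ge 2$ (together with at least one predicate symbol, since if $\sigma$ had no predicate symbols we'd be closer to (c.1)-type degeneracy --- I need to check whether a pure-function signature with equality still forces reduction here; with equality this is nontrivial and is the place I would be most careful). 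In each surviving case I would exhibit a known reduction class --- e.g. $\Phi^{=}(P(\forall^{2}),\sigma')$ for a two-element signature $\sigma'$ with equality, from \cite{Gurevich76} --- and build an algorithm, as in Theorem~\ref{T5.3}, that rewrites a sentence of that reduction class into a conjunction $\beta\wedge\pi_2\wedge\cdots\wedge\pi_{n+1}\in K^{=}$ by padding the higher-arity symbols with repeated arguments and using the tautological sentences $\pi_i\in\Phi^{=}(\Pi_i,\sigma)$ to fill the remaining conjuncts; the rewriting must be checked to preserve both satisfiability and finite satisfiability.

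The main obstacle I expect is the interaction of equality with function symbols in the boundary cases: unlike the equality-free Theorem~\ref{T5.3}, here the presence of $=$ changes which monadic-and-one-unary-function fragments are decidable, so I must quote exactly the right classification from \cite{Gurevich76} and verify that the ``at most one unary function symbol'' restriction in (b.2) and (b.3) is genuinely the dividing line --- in particular that two unary function symbols with equality already yield a reduction class, which is the step where a naive padding argument does not obviously apply and where I would need the specific undecidability result rather than a generic reduction. A secondary obstacle is confirming that the four padding-reduction constructions land inside the prefix sets $\Pi_i$ as required (the subword-closure of $\Pi_i$ is what makes the tautological padding sentences $\pi_i$ available), and that the case where $\sigma$ contains only function symbols (no predicates) but with equality is correctly slotted --- either it falls under (b.1)/(b.3) decidability or, if a high-arity function is present, it must still be a reduction class via the equality-based encoding.
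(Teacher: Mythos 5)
Your proposal follows essentially the same route as the paper: the decidable cases (b.1)--(b.3) are handled by reduction to the classes $\Phi^{=}(P(\exists^{\ast}),\sigma)$, $\Phi^{=}(P,\sigma)$ and $\Phi^{=}(P(\exists^{\ast}\forall\exists^{\ast}),\sigma)$ via the Main Theorem of \cite{Gurevich76} (not \cite{Gurevich69}, which is the equality-free reference), and the hard direction proceeds by exactly the case split you sketch, padding sentences from the reduction classes $\Phi^{=}(P(\forall),\sigma_{1})$ (two unary functions), $\Phi^{=}(P(\forall),\sigma_{2})$ (one binary function) and $\Phi^{=}(P(\forall^{2}),\sigma_{3})$ (one binary predicate plus one unary function) into $K^{=}$ with tautological conjuncts $\pi_{i}$. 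The two points you flag as uncertain are resolved just as you anticipated: two unary function symbols with equality under a single $\forall$ already form a reduction class by \cite{Gurevich76} (see also Theorem 4.0.1 of \cite{BorgerGG97}), and the predicate-free case causes no trouble because $\sigma_{1}$ and $\sigma_{2}$ are pure function signatures and the padding sentences can be built from $x_{0}=x_{0}$.
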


\begin{proof}
Let $K^{=}$ satisfy the condition (b.1). Evidently, there is an algorithm,
which transforms an arbitrary sentence $\alpha \in K^{=}$ into a sentence $%
\alpha ^{\prime }\in \Phi ^{=}(P(\exists ^{\ast }),\sigma )$ such that $%
\alpha $ is satisfiable if and only if $\alpha ^{\prime }$ is satisfiable
and $\alpha $ is finite satisfiable if and only if $\alpha ^{\prime }$ is
finite satisfiable. Using Main Theorem \cite{Gurevich76}, we obtain that the
class $\Phi ^{=}(P(\exists ^{\ast }),\sigma )$ is decidable. Therefore the
class $K^{=}$ is decidable.

Let $K^{=}$ satisfy the condition (b.2). From Main Theorem \cite{Gurevich76}
it follows that the class $\Phi ^{=}(P,\sigma )$ is decidable, where $P$ is
the set of all words in the alphabet $\{\exists ,\forall \}$. Taking into
account that there is an algorithm that converts any sentence from $K^{=}$
to prenex form, we obtain that the class $K^{=}$ is decidable.

Let $K^{=}$ satisfy the condition (b.3). Now we will almost repeat the
reasoning from the proof of Theorem \ref{T5.3}. Let $\alpha =\alpha
_{1}\wedge \cdots \wedge \alpha _{n+1}\in K^{=}$. By introducing some new
nullary function symbols $c_{1},\ldots ,c_{k}$ and changing variables, we
can transform the sentence $\alpha $ to the form $\beta =\forall
x_{0}\exists \ldots \exists u_{1}\wedge \cdots \wedge \forall x_{0}\exists
\ldots \exists u_{n+1}$ such that $u_{1},\ldots ,u_{n+1}$ are
quantifier-free formulas of the signature $\sigma ^{\prime }=\sigma \cup
\{c_{1},\ldots ,c_{k}\}$ with pairwise disjoint sets of variables (except $%
x_{0}$) and, for any structure $U$ of the signature $\sigma $, $\alpha $ is
true in $U$ if and only if $\beta $ is true in $U$ for some interpretation
of new symbols $c_{1},\ldots ,c_{k}$ in the universe of $U$. Denote $\gamma
=\forall x_{0}(\exists \ldots \exists u_{1}\wedge \cdots \wedge \exists
\ldots \exists u_{n+1})$. Evidently, for any structure $U^{\prime }$ of the
signature $\sigma ^{\prime }$, $\gamma $ is true in $U^{\prime }$ if and
only if $\beta $ is true in $U^{\prime }$. Further, we can transform the
sentence $\gamma $ to the form $\delta =\forall x_{0}\exists \ldots \exists
u $ such that $u$ is a quantifier-free formula of the signature $\sigma
^{\prime }$ and, for any structure $U^{\prime }$ of the signature $\sigma
^{\prime }$, $\delta $ is true in $U^{\prime }$ if and only if $\gamma $ is
true in $U^{\prime }$.

Let $y_{1},\ldots ,y_{k}$ be variables from $X$ that do not belong to $%
\delta $. Denote $\varepsilon =\exists y_{1}\ldots \exists y_{k}\forall
x_{0}\exists \ldots $ $\exists v$, where $v$ is obtained from $u$ by replacing
symbols $c_{1},\ldots ,c_{k}$ with variables $y_{1},\ldots ,y_{k}$,
respectively. One can show that, for any structure $U$ of the signature $%
\sigma $, $\varepsilon $ is true in $U$ if and only if $\delta $ is true in $%
U$ for some interpretation of new symbols $c_{1},\ldots ,c_{k}$ in the
universe of $U$. Therefore, for any structure $U$ of the signature $\sigma $%
, $\varepsilon $ is true in $U$ if and only if $\alpha $ is true in $U$. It
is clear that $\varepsilon \in \Phi ^{=}(P(\exists ^{\ast }\forall \exists
^{\ast }),\sigma )$. Using Main Theorem \cite{Gurevich76}, we obtain that
the class $\Phi ^{=}(P(\exists ^{\ast }\forall \exists ^{\ast }),\sigma )$
is decidable. Therefore the class $K^{=}$ is decidable.

Let $K^{=}$ do not satisfy any of the conditions (b.1), (b.2), and (b.3)$\,$%
. One can show that in this case $K^{=}$ satisfy at least one of the
following three conditions:

(c.1) There exists $i\in \{1,\ldots ,n\}$ such that $\Pi _{i}\nsubseteq
P(\exists ^{\ast })$, i.e., $\forall \in \Pi _{i}$, and $\sigma $ contains
two $1$-ary function symbols.

(c.2) There exists $i\in \{1,\ldots ,n\}$ such that $\Pi _{i}\nsubseteq
P(\exists ^{\ast })$, i.e., $\forall \in \Pi _{i}$, and $\sigma $ contains a
function symbol with arity greater than one.

(c.3) There exists $i\in \{1,\ldots ,n\}$ such that $\Pi _{i}\nsubseteq
P(\exists ^{\ast }\forall \exists ^{\ast })$, i.e., $\forall ^{2}\in \Pi
_{i} $, and $\sigma $ contains a predicate symbol with arity greater than
one and a function symbol.

Denote by $\sigma _{1}$ the signature consisting of two $1$-ary function
symbols, $\sigma _{2}$ the signature consisting of one $2$-ary function
symbol, and $\sigma _{3}$ the signature consisting of one $2$-ary predicate
symbol and one $1$-ary function symbol. From Main Theorem \cite{Gurevich76}
(see also Theorem 4.0.1 \cite{BorgerGG97}) it follows that the classes $\Phi
^{=}(P(\forall ),\sigma _{1})$, $\Phi ^{=}(P(\forall ),\sigma _{2})$, and $%
\Phi ^{=}(P(\forall ^{2}),\sigma _{3})$ are classes of reduction.

If the class $K^{=}$ satisfies the condition (c.1), then $\Phi
^{=}(P(\forall ),\sigma _{1})$ can be reduced to $K^{=}$, and $K^{=}$ is a
class of reduction. If the class $K^{=}$ satisfies the condition (c.2), then
$\Phi ^{=}(P(\forall ),\sigma _{2})$ can be reduced to $K^{=}$, and $K^{=}$
is a class of reduction. If the class $K^{=}$ satisfies the condition (c.3),
then $\Phi ^{=}(P(\forall ^{2}),\sigma _{3})$ can be reduced to $K^{=}$, and
$K^{=}$ is a class of reduction. We will not go into details of the proof --
similar statements were considered in the proof of Theorem \ref{T5.3}.
\end{proof}

\section{Problem of Optimization\label{S5.6}}

In this section, we consider the problem of optimization of schemes of
computation trees and study how its decidability depends on the decidability of the
problems of solvability and satisfiability. We prove that, for any strictly
limited complexity measure, the problem of optimization is undecidable if
the problem of satisfiability is undecidable. We prove also that the problem
of optimization is decidable if the problem of solvability is decidable and
the considered strictly limited complexity measure satisfies some additional
condition. Note that the problem of solvability and corresponding to it problem
of satisfiability are either both decidable or both undecidable.

\subsection{Equality Is Not Allowed}

First, we consider the case, when the equality is not allowed.

Let $\sigma $ be a finite or countable signature. If $\sigma $ is finite,
then we represent it in the form $\sigma =\{q_{0},\ldots ,q_{m}\}$, where $%
q_{0},\ldots ,q_{m}$ are predicate and function symbols, each with its own
arity. If $\sigma $ is infinite, then we represent it in the form $\sigma
=\{q_{0},q_{1},\ldots \}$. We denote by $\sigma ^{\ast }$ the set of finite
words in the alphabet $\sigma $ including the empty word $\lambda $.

\begin{definition}
A \emph{complexity measure} over the signature $\sigma $ is an arbitrary map
$\psi :\sigma ^{\ast }\rightarrow \omega $. The complexity measure $\psi $
will be called \emph{strictly limited} if it is computable and, for any $%
\alpha _{1},\alpha _{2},\alpha _{3}\in \sigma ^{\ast }$, it possesses the
following property: if $\alpha _{2}\neq \lambda $, then $\psi (\alpha _{1}\alpha
_{2}\alpha _{3})>\psi (\alpha _{1}\alpha _{3})$.
\end{definition}
\index{Complexity measure over signature}
\index{Complexity measure over signature!strictly limited}

We now consider some examples of complexity measures. Let $\psi :\sigma
\rightarrow \omega \setminus \{0\}$. The function $\psi $ is called a \emph{%
weight function} for the signature $\sigma $. We extend the function $\psi $
to the set $\sigma ^{\ast }$ in the following way: $\psi (\alpha )=0$ if $%
\alpha =\lambda $ and $\psi (\alpha )=\sum_{j=1}^{m}\psi (q_{i_{j}})$ if $%
\alpha =q_{i_{1}}\cdots q_{i_{m}}$. This function is called a \emph{weighted
depth}. If $\psi (q_{i})=1$ for any $q_{i}$ $\in \sigma $, then the function
$\psi $ is called the \emph{depth} and is denoted by $h$. The depth and any
computable weighted depth are strictly limited complexity measures.
\index{Complexity measure over signature!weighted depth}
\index{Complexity measure over signature!depth}

Let $\psi $ be a complexity measure over the signature $\sigma $. We extend
it to the sets $\mathrm{Probl}(\sigma )\ $and $\mathrm{Tree}(\sigma )$. Let
$\beta $ be a finite sequence of function and predicate expressions of the
signature $\sigma $ that do not contain the equality. We correspond to $%
\beta $ a word $\mathrm{word}(\beta )$ from $\sigma ^{\ast }$. If the length
of $\beta $ is equal to $0$, then $\mathrm{word}(\beta )=\lambda $. If $%
\beta =\beta _{1},\ldots ,\beta _{m}$, then $\mathrm{word}(\beta )=$ $%
q_{i_{1}}\cdots q_{i_{m}}$, where, for $j=1,\ldots ,m$, $q_{i_{j}}$ is the
symbol of the signature $\sigma $ from the expression $\beta _{j}$.

Let $s=(n,\nu ,\beta _{1},\ldots ,\beta _{m})$ be a scheme of problem from
the set $\mathrm{Probl}(\sigma )$. Then $\psi (s)=\psi (\mathrm{word}(\beta
_{1},\ldots ,\beta _{m}))$.

Let $S=(n,G)$ be a scheme of computation tree from the set $\mathrm{Tree}(\sigma )$
and $\tau =w_{1},d_{1},w_{2},$ $d_{2},\ldots ,w_{m},d_{m},w_{m+1}$ be a complete
path of the scheme $S$. We denote $\beta _{\tau }=\beta _{1},\ldots ,\beta
_{m}$ the sequence of function and predicate expressions attached to nodes $%
w_{1},\ldots ,w_{m}$. Then $\psi (S)=\max \{\psi (\mathrm{word}(\beta _{\tau
})):\tau \in \mathrm{Path}(S)\}$, where $\mathrm{Path}(S)$ is the set of
complete paths of the scheme $S$. The value $\psi (S)$ will be called the $%
\psi $-\emph{complexity} of the scheme of computation tree $S$. We denote by $h(S)$
the depth of the scheme of computation tree $S$. By $\sigma (S)$, we denote the set
of symbols of the signature $\sigma $ used in function and predicate
expressions in the scheme $S$.
\index{Scheme of computation tree of signature!complexity}

\begin{lemma}
\label{L5.1} Let $\psi $ be a strictly limited complexity measure over
the signature $\sigma $ and $S$ be a scheme of computation tree from the set $\mathrm{%
Tree}(\sigma )$. Then the following statements hold:

{\rm (a)} $\psi (S)\geq h(S)$.

{\rm (b)} $\psi (S)\geq \max \{\psi (q_{j}):q_{j}\in \sigma (S)\}$.
\end{lemma}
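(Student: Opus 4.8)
The plan is to verify both inequalities directly from the definitions of $\psi(S)$, $h(S)$, and the defining property of a strictly limited complexity measure. Recall that $\psi(S) = \max\{\psi(\mathrm{word}(\beta_\tau)) : \tau \in \mathrm{Path}(S)\}$, while $h(S)$ is the depth of $S$, i.e. the maximum over complete paths $\tau$ of the number of function and predicate nodes on $\tau$, which equals the length of $\mathrm{word}(\beta_\tau)$. So it suffices to fix a single complete path $\tau$ with word $w = q_{i_1}\cdots q_{i_m} = \mathrm{word}(\beta_\tau)$ and prove the two bounds for that $w$, then take the maximum over $\tau$ on both sides.

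For part (a), I would argue that $\psi(w) \geq m$ whenever $w$ has length $m$. This follows by induction on $m$ using the strictly limited property with $\alpha_1 = \alpha_3 = \lambda$ (or with $\alpha_3$ a suffix): inserting one nonempty letter strictly increases $\psi$, hence $\psi(q_{i_1}\cdots q_{i_j}) > \psi(q_{i_1}\cdots q_{i_{j-1}}) \geq j-1$, and since $\psi$ takes values in $\omega$ this forces $\psi(q_{i_1}\cdots q_{i_j}) \geq j$; the base case is $\psi(\lambda) \geq 0$, which holds trivially. Applying this to the path achieving $h(S)$ gives $\psi(S) \geq \psi(\mathrm{word}(\beta_\tau)) \geq h(S)$.

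For part (b), let $q_j \in \sigma(S)$. Then $q_j$ appears in some function or predicate expression attached to a node of $S$, and that node lies on some complete path $\tau$; hence $q_j$ occurs as a letter in $w = \mathrm{word}(\beta_\tau)$, so we can write $w = \alpha_1 q_j \alpha_3$ for suitable $\alpha_1, \alpha_3 \in \sigma^\ast$. Applying the strictly limited property with $\alpha_2 = q_j \neq \lambda$ gives $\psi(w) = \psi(\alpha_1 q_j \alpha_3) > \psi(\alpha_1\alpha_3) \geq 0$; but I actually want $\psi(w) \geq \psi(q_j)$. To get this, apply the property repeatedly to strip $\alpha_1$ and $\alpha_3$ letter by letter down to the single-letter word $q_j$: each removal of a nonempty block strictly decreases $\psi$, so $\psi(w) \geq \psi(\alpha_1 q_j \alpha_3) \geq \cdots \geq \psi(q_j)$ (with equality only when $w = q_j$). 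Hence $\psi(S) \geq \psi(\mathrm{word}(\beta_\tau)) \geq \psi(q_j)$, and taking the maximum over $q_j \in \sigma(S)$ finishes the proof.

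I do not expect a serious obstacle here; the only subtlety is being careful that the strictly limited property is stated for a single inserted block $\alpha_2$, so the reductions in both parts must proceed one block at a time (equivalently, one letter at a time), and one must invoke integrality of $\psi$ in part (a) to turn the strict inequality $\psi > j-1$ into $\psi \geq j$. If $\sigma(S)$ is empty (no function or predicate nodes), part (b) is vacuous since the maximum over the empty set is conventionally $-\infty$ or the statement is read as holding trivially; I would note this edge case only in passing.
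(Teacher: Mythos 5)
Your proof is correct and follows essentially the same route as the paper, which reduces both parts to the two facts $\psi(\alpha)\geq|\alpha|$ and $\psi(\alpha)\geq\psi(q_j)$ for any letter $q_j$ of $\alpha\in\sigma^{\ast}$; you simply supply the inductive and letter-stripping details that the paper leaves as "easy to show." No gaps.
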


\begin{proof}
Let $\alpha \in \sigma ^{\ast }$. Then it is easy to show that $\psi (%
\alpha )\geq |\alpha |$, where $|\alpha |$ is the length of the word $%
\alpha $, and $\psi (\alpha )\geq \psi (q_{j})$ for any letter $q_{j}$
in the word $\alpha $. Using these relations, one can show that the
statements of lemma hold.
\end{proof}

Let us remind that $H(\sigma )$ is the set of sentences of the signature $%
\sigma $ that do not contain equality. Let $\psi $ be a strictly limited
complexity measure over the signature $\sigma $, $C$ be a nonempty class of
structures of the signature $\sigma $, and $H$ be a nonempty subset of the
set $H(\sigma )$.

\begin{definition}
We now define the \emph{problem of optimization} for the triple $(\psi ,H,C)$%
: for arbitrary sentence $\alpha \in H$ and scheme of problem $s\in \mathrm{%
Probl}(\sigma )$, we should find a scheme of computation tree $S\in $ $\mathrm{Tree}%
(\sigma )$, which solves the scheme of problem $s$ relative to the class $%
C(\alpha )$ and has the minimum $\psi $-complexity. We will call such scheme
of computation tree \emph{optimal} relative to $\psi $, $s$, and $C(\alpha )$.
\end{definition}
\index{Algorithmic problem!problem of optimization}
\index{Scheme of computation tree of signature!optimal}

\begin{lemma}
\label{L5.2} The $\psi $-complexity of a scheme of computation tree that is optimal
relative to $\psi $, $s$, and $C(\alpha )$ is at most $\psi (s)$.
\end{lemma}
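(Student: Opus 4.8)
The plan is to exhibit, for any scheme of problem $s = (n,\nu,\beta_1,\ldots,\beta_m) \in \mathrm{Probl}(\sigma)$, a scheme of computation tree $S_s \in \mathrm{Tree}(\sigma)$ that solves $s$ relative to $C(\alpha)$ for \emph{every} $\alpha$ (indeed relative to the class of all structures) and has $\psi$-complexity at most $\psi(s)$. Since an optimal scheme has the minimum $\psi$-complexity among all schemes that solve $s$ relative to $C(\alpha)$, its complexity is bounded above by $\psi(S_s) \le \psi(s)$, which gives the lemma.

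The construction of $S_s$ is the natural one: simulate the problem scheme directly. Starting from the root, process the expressions $\beta_1,\ldots,\beta_m$ in order. For each function expression $\beta_i = (x_j \Leftarrow f(x_{l_1},\ldots,x_{l_p}))$, create a function node labeled with exactly that expression; it has one outgoing (unlabeled) edge to the node handling $\beta_{i+1}$. For each predicate expression $\beta_i$ (one of $r(x_{l_1},\ldots,x_{l_k})$), create a predicate node labeled with $\beta_i$; both of its outgoing edges (labeled $0$ and $1$) lead to the subtree handling $\beta_{i+1}$, so that after $\beta_m$ the scheme branches into a complete binary tree on the $k$ predicate expressions, with $2^k$ leaves indexed by tuples $\bar\delta = (\delta_1,\ldots,\delta_k) \in E_2^k$ recording the outcomes of the predicate tests. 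Label the leaf reached by $\bar\delta$ with the number $\nu(\bar\delta)$. One checks that for any structure $U$ with universe $A$ and any $\bar a \in A^n$, the unique realizable complete path of $S_s$ records exactly the values $\alpha_1(\bar a),\ldots,\alpha_k(\bar a)$ of the special representation of $s$, and hence its terminal value is $\nu(\alpha_1(\bar a),\ldots,\alpha_k(\bar a)) = \varphi_{(s,U)}(\bar a)$; thus $\varphi_{(S_s,U)} = \varphi_{(s,U)}$, so $(S_s,U)$ solves $(s,U)$ for every $U$, in particular for every $U \in C(\alpha)$.

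It remains to bound $\psi(S_s)$. Every complete path $\tau$ of $S_s$ uses, as its sequence of function and predicate expressions, exactly the expressions $\beta_1,\ldots,\beta_m$ in this order (the branching only records outcomes and introduces no new nodes with symbols). Hence $\mathrm{word}(\beta_\tau) = \mathrm{word}(\beta_1,\ldots,\beta_m)$ for every $\tau \in \mathrm{Path}(S_s)$, so $\psi(S_s) = \max\{\psi(\mathrm{word}(\beta_\tau)) : \tau \in \mathrm{Path}(S_s)\} = \psi(\mathrm{word}(\beta_1,\ldots,\beta_m)) = \psi(s)$. An optimal scheme of computation tree relative to $\psi$, $s$, and $C(\alpha)$ therefore has $\psi$-complexity at most $\psi(S_s) = \psi(s)$.

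The only genuine subtlety — and the step I would write out most carefully — is verifying that the unique realizable path in $S_s$ really does reproduce the term-substitution bookkeeping encoded in $\kappa(n,\beta,\beta_i)$, i.e.\ that the formulas attached to the predicate nodes of $S_s$ along any path coincide with the formulas $\alpha_1,\ldots,\alpha_k$ of the special representation of $s$. This is immediate from the definitions in Sections \ref{S5.1} and \ref{S5.3}, since the sequence of expressions driving both constructions is the same $\beta$, and the map $\beta \mapsto (M_i, \kappa(n,\beta,\beta_i))$ depends only on $n$ and $\beta$; but it is worth stating explicitly. Everything else is routine.
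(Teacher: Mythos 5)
Your proposal is correct and follows essentially the same route as the paper: the paper's proof likewise constructs a scheme of computation tree each of whose complete paths carries exactly the expression sequence $\beta_1,\ldots,\beta_m$, concludes $\psi(S)=\psi(s)$, and derives the bound for the optimal scheme. You simply spell out the construction and the verification that the paper leaves as ``it is easy to construct.''
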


\begin{proof}
Let $s=(n,\nu ,\beta _{1},\ldots ,\beta _{m})$. It is easy to construct a
scheme of computation tree $S\in \mathrm{Tree}(\sigma )$, which solves the scheme of
problem $s$ relative to the class $C(\alpha )$ and for which, for any
complete path $\tau $ of $S$, the sequence $\beta _{\tau }$ of predicate and
function expressions attached to nodes of $\tau $ coincides with $\beta
_{1},\ldots ,\beta _{m}$. Therefore $\psi (S)=\psi (s)$. Thus, the $\psi $%
-complexity of a scheme of computation tree that is optimal relative to $\psi $, $s$,
and $C(\alpha )$ is at most $\psi (s)$.
\end{proof}

Let $S_{1}=(n,G_{1})$ and $S_{2}=(n,G_{2})$ be schemes of computation trees from the
set $\mathrm{Tree}(\sigma )$. We will say that these schemes are \emph{%
equivalent }if, for any structure $U$ of the signature $\sigma $, the
functions implemented by the computation trees $(S_{1},U)$ and $(S_{2},U)$ coincide.
\index{Scheme of computation tree of signature!equivalent schemes}

\begin{lemma}
\label{L5.3} Any scheme of computation tree $S_{1}=(n,G_{1})\in \mathrm{Tree}(\sigma
)$ can be transformed by changing of variables into a scheme of computation tree $%
S_{2}=(n,G_{2})\in \mathrm{Tree}(\sigma )$, which is equivalent to $S_{1}$
and in which all variables in function and predicate expressions belong to
the set $X_{n+2^{h(S_{1})}}$.
\end{lemma}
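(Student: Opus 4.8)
The plan is to bound the number of distinct variables that can appear in the expressions along any complete path of $S_1$, and then show that a uniform renaming into $X_{n+2^{h(S_1)}}$ is possible. First I would recall the construction from Section~\ref{S5.1}: along a complete path $\tau = w_1,d_1,\ldots,w_m,d_m,w_{m+1}$, the expressions attached to the nodes are of the form $x_j \Leftarrow f(x_{l_1},\ldots,x_{l_m})$ or predicate expressions, and only function nodes introduce a ``new'' assignment to a variable. Each function node overwrites one register $x_j$; a predicate node introduces no new variable. So along a single path, the set of variable indices that occur is contained in $X_n$ (the input variables) together with the at most $h(S_1)$ indices appearing on the left-hand sides of the function expressions along that path — but the key point for the renaming is not the count along one path, it is that we must consistently rename across the whole tree so that the branching structure is preserved.

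The main idea for getting the bound $n + 2^{h(S_1)}$ is as follows. I would process the tree from the root downward and assign to each node $w$ a finite injective substitution $\theta_w$ on variable indices, chosen so that: (i) $\theta_{\mathrm{root}}$ is the identity on $X_n$; (ii) whenever we pass through a function node labeled $x_j \Leftarrow f(\cdots)$, we allocate a fresh index for the new incarnation of $x_j$ (fresh meaning not currently ``live'' anywhere we still need it), and update the substitution accordingly for the subtree below; (iii) predicate and terminal nodes inherit the substitution unchanged. The renamed scheme $S_2$ is obtained by applying $\theta_w$ to the expression at each node $w$. Because at most $h(S_1)$ function nodes lie on any root-to-leaf path, each path needs at most $h(S_1)$ fresh indices beyond $X_n$; the reason the total is $2^{h(S_1)}$ rather than just $h(S_1)$ is that different branches may need to use disjoint fresh indices if we want a single global renaming rather than a branch-dependent one — a binary tree of depth $h(S_1)$ has at most $2^{h(S_1)}-1$ internal nodes, hence at most that many function nodes in total, so all the fresh indices together, plus $X_n$, fit into $X_{n+2^{h(S_1)}}$. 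I would also note that since along each path each register is read through the term sequences $M_i$ built in Section~\ref{S5.1}, renaming the left-hand side of a function expression forces a corresponding renaming of subsequent reads of that register, which is exactly what propagating $\theta_w$ down the subtree accomplishes; this is a purely syntactic substitution and does not change which complete path is realizable in any structure $U$ for any tuple $\bar a$, nor the terminal value reached. Hence $S_2$ is equivalent to $S_1$.

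The verification of equivalence is the routine but slightly delicate part: I would argue that for any structure $U$ and any $\bar a \in A^n$, the unique realizable path of $S_2$ corresponds, node by node under the renaming, to the unique realizable path of $S_1$, and carries the same terminal label, because the substitution applied at each node is injective and consistent with the substitutions applied at its ancestors, so the atomic formulas $\kappa(n,\beta,\beta_i)^c$ attached to edges are merely renamed, preserving truth. This uses the fact (stated after the definition of realizable path) that there is exactly one realizable complete path in each structure for each tuple.

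The step I expect to be the main obstacle is bookkeeping the ``freshness'' condition so that the bound is exactly $2^{h(S_1)}$ and not something larger: one must be careful that the fresh index allocated at a function node is chosen from a pool that is disjoint from $X_n$ and from all indices allocated at other function nodes, and that $h(S_2)=h(S_1)$ so the bound is stable; the simplest clean argument is to observe that the number of function nodes in a binary tree of depth $h$ is at most $2^h - 1 < 2^h$, allocate a distinct fresh index to each function node once and for all (say index $n + (\text{rank of the function node})$), and then show that applying these renamings consistently down each path yields a well-defined scheme with all variables in $X_{n+2^{h(S_1)}}$ — here ``consistently'' means that a variable read at a node is renamed according to the most recent function node above it that assigned that register, or left in $X_n$ if no such node exists.
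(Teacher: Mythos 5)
Your core counting step is the same as the paper's: a tree in which every root-to-leaf path has at most $h(S_{1})$ labeled nodes has at most $2^{h(S_{1})}-1$ function nodes, and this is what places all fresh indices inside $X_{n+2^{h(S_{1})}}$. Where you diverge is in the renaming itself. You propose an occurrence-dependent, SSA-style substitution: every function node gets its own fresh target index, and each read is resolved to the most recent assignment above it, so the same original variable may be renamed differently at different nodes. The paper instead performs a single global renaming of variable \emph{names}: it collects the set $Y$ of variables that appear on the left-hand side of some function expression and do not lie in $X_{n}$, maps these injectively to $x_{n},\ldots,x_{n+p-1}$ (with $p\leq 2^{h(S_{1})}$), and leaves assignments to input variables untouched. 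Your version is more machinery for the same bound, but it is semantically sound: since every term $t_{ij}$ in the sequences $M_{i}$ is built only over $X_{n}$, a consistent renaming of register names leaves each $\kappa(n,\beta,\beta_{i})$ literally unchanged, so realizability of each complete path and the terminal labels are preserved, exactly as you argue.

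There is one concrete case your rule does not cover. You say a read with no assigning function node above it is ``left in $X_{n}$,'' which tacitly assumes that every variable read before being written is an input variable. A scheme in $\mathrm{Tree}(\sigma)$ may read, say, $x_{1000}$ without ever assigning it; your renaming leaves that occurrence as $x_{1000}$, and the conclusion that all variables of $S_{2}$ lie in $X_{n+2^{h(S_{1})}}$ fails. The paper closes this by replacing every variable outside $X_{n}\cup Y$ with $x_{n-1}$, which is legitimate precisely because of the initialization convention $M_{1}=x_{0},\ldots,x_{n-2},x_{n-1},x_{n-1},\ldots$ from Section~\ref{S5.1}: every unassigned register with index at least $n-1$ carries the term $x_{n-1}$, so the collapse changes no formula $\kappa(n,\beta,\beta_{i})$. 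Adding this clause to your construction repairs the argument without affecting the bound.
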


\begin{proof}
One can show that the number of function nodes in the scheme $S_{1}$ is at
most $2^{h(S_{1})}$. Each function node $v$ of the scheme $S_{1}$ is labeled
with a function expression $$x_{i(v)}\Leftarrow q_{j(v)}(x_{l(1,v)},\ldots
,x_{l(m(v),v)}),$$ where $m(v)$ is the arity of the function symbol $q_{j(v)}$%
. Let $v_{1},\ldots ,v_{k}$ be all function nodes of the scheme $S_{1}$ for
which variables $x_{i(v_{1})},\ldots ,x_{i(v_{k})}$ do not belong to the set
$X_{n}$. Let $x_{j_{1}},\ldots ,x_{j_{p}}$ be all pairwise different
variables in the sequence $x_{i(v_{1})},\ldots ,x_{i(v_{k})}$. Denote $%
Y=\{x_{j_{1}},\ldots ,x_{j_{p}}\}$.

In all expressions attached to function and predicate nodes of the scheme $%
S_{1}$, we replace each variable that does not belong to the set $X_{n}\cup
Y $ with the variable $x_{n-1}$ and replace variables $x_{j_{1}},\ldots
,x_{j_{p}}$ with variables $x_{n},\ldots ,x_{n+p-1}$, respectively. Denote
by $S_{2}=(n,G_{2})$ the obtained scheme of computation tree. One can show that the
scheme $S_{2}$ is equivalent to the scheme $S_{1}$ and all variables in
function and predicate expressions of the scheme $S_{2}$ belong to the set $%
X_{n+2^{h(S_{1})}}$.
\end{proof}

Let $\psi $ be a strictly limited complexity measure over the signature $%
\sigma $. For $i\in \omega $, we denote $\omega _{\psi
}(i)=\{q_{j}:q_{j}\in \sigma ,\psi (q_{j})=i\}$. Define a partial
function $K_{\psi }:\omega \rightarrow \omega $ as follows. Let $i\in
\omega $. If $\omega _{\psi }(i)$ is a finite set, then $K_{\psi
}(i)=|\omega _{\psi }(i)|$. If $\omega _{\psi }(i)$ is an infinite
set, then the value $K_{\psi }(i)$ is indefinite.

For us the most interesting situation is when the function $K_{\psi }$ is a
total recursive function. If $\sigma $ is a finite signature, then,
evidently, $K_{\psi }$ is a total recursive function.

We now consider a class of strictly limited complexity measures $\psi $ over
infinite  signature $\sigma $ for which $K_{\psi }$ is a total recursive
function. Let $\sigma =\{q_{j}:j\in \omega \}$,  $g:\omega \rightarrow
\omega \setminus \{0\}$ be a nondecreasing unbounded total recursive
function, and $\psi $ be a weighted depth over signature $\sigma $ for which
$\psi (q_{j})=g(j)$ for any $j\in \omega $. Then $\psi $ is a strictly
limited complexity measure over the signature $\sigma $ for which $K_{\psi }$
is a total recursive function.

\begin{theorem}
\label{T5.5}Let $\psi $ be a strictly limited complexity measure over the
signature $\sigma $ for which $K_{\psi }$ is a total recursive function, $C$
be a nonempty class of structures of the signature $\sigma $, $H$ be a
nonempty subset of the set $H(\sigma )$, and the problem of solvability for
the quadruple $(\mathrm{Probl}(\sigma ),\mathrm{Tree}(\sigma ),H,C)$ be
decidable. Then the problem of optimization for the triple $(\psi ,H,C)$ is
decidable.
\end{theorem}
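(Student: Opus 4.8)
The plan is to turn the optimization problem into a finite search among candidate schemes, filtered by the decision procedure for the solvability problem. Fix an instance: $\alpha\in H$ and a scheme of problem $s=(n,\nu,\beta_{1},\ldots,\beta_{m})\in\mathrm{Probl}(\sigma)$, where $\nu\colon E_{2}^{k}\to\omega$. First I would compute the bound $B=\psi(s)$ (possible since $\psi$ is computable). By Lemma~\ref{L5.2}, some scheme solving $s$ relative to $C(\alpha)$ has $\psi$-complexity at most $B$, so the set of achievable $\psi$-complexities is a nonempty subset of $\omega$ and an optimal scheme $S^{\ast}$ exists with $\psi(S^{\ast})\le B$. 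By Lemma~\ref{L5.1}(a), $h(S^{\ast})\le B$; by Lemma~\ref{L5.1}(b), every symbol occurring in $S^{\ast}$ lies in $\Sigma_{B}=\{q_{j}\in\sigma:\psi(q_{j})\le B\}$. Since $K_{\psi}$ is total recursive, $|\omega_{\psi}(i)|=K_{\psi}(i)<\infty$ for each $i\le B$, so $\Sigma_{B}$ is finite; it is moreover effectively computable, by enumerating $q_{0},q_{1},\ldots$, evaluating $\psi$ on each, and stopping once exactly $K_{\psi}(i)$ symbols of $\psi$-value $i$ have been collected for every $i\le B$.

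Next I would bound the remaining data of a candidate scheme. By Lemma~\ref{L5.3}, $S^{\ast}$ can be transformed by a change of variables into an equivalent scheme in which all variables of function and predicate expressions lie in $X_{n+2^{h(S^{\ast})}}\subseteq X_{n+2^{B}}$; this transformation changes neither the depth, nor the set of symbols used, nor the $\psi$-complexity, since the word of each complete path depends only on the symbols. Finally the terminal labels may be normalized: let $R=\nu(E_{2}^{k})$, a nonempty finite set computable from $\nu$. If a scheme solves $s$ relative to $C(\alpha)$, then any complete path realizable in some $U\in C(\alpha)$ for some tuple $\bar a$ carries the terminal label $\varphi_{(s,U)}(\bar a)\in R$, while any complete path realizable in no structure of $C(\alpha)$ may have its label replaced by an arbitrary element of $R$ without affecting either the property of solving $s$ relative to $C(\alpha)$ or the $\psi$-complexity. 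Hence there is an optimal scheme all of whose terminal labels lie in $R$.

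Combining these reductions, let $\mathcal{T}$ be the set of all schemes $S\in\mathrm{Tree}(\sigma)$ with $n$ input variables whose function and predicate nodes are labeled with expressions over symbols from $\Sigma_{B}$ and variables from $X_{n+2^{B}}$, whose depth is at most $B$, and whose terminal labels lie in $R$. Since $\Sigma_{B}$, $X_{n+2^{B}}$, and $R$ are finite and the depth is bounded, there are only finitely many admissible node labels and tree shapes, so $\mathcal{T}$ is finite and can be effectively constructed. By the preceding discussion (applying the variable-change and label-normalization to the scheme of Lemma~\ref{L5.2} and to $S^{\ast}$), $\mathcal{T}$ contains at least one scheme solving $s$ relative to $C(\alpha)$, and it contains an optimal one. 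The algorithm therefore enumerates $\mathcal{T}$, runs the decision procedure for the solvability problem for $(\mathrm{Probl}(\sigma),\mathrm{Tree}(\sigma),H,C)$ on each triple $(\alpha,s,S)$ to discard the members of $\mathcal{T}$ that do not solve $s$ relative to $C(\alpha)$, computes $\psi(S)$ for each surviving $S$, and outputs one of minimum $\psi$-complexity; this is a scheme of computation tree optimal relative to $\psi$, $s$, and $C(\alpha)$.

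I expect the main obstacle to be the first two reductions: converting the hypothesis that $K_{\psi}$ is total recursive into an effective, finite bound on the symbols admissible in an optimal scheme, and the easy-to-overlook need to bound the terminal labels through the finite range of $\nu$. The bounds on depth and on variables follow directly from Lemmas~\ref{L5.1} and \ref{L5.3}, and once the search space $\mathcal{T}$ is recognized as finite and computable and guaranteed to contain an optimal scheme, the decidability of the solvability problem does the rest.
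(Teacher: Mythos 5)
Your proposal is correct and follows essentially the same route as the paper's proof: bound the search space by $\psi(s)$ via Lemma~\ref{L5.2}, use Lemma~\ref{L5.1} to bound depth and admissible symbols, use the total recursiveness of $K_{\psi}$ to compute the finite symbol set, use Lemma~\ref{L5.3} to bound the variables, restrict terminal labels to the range of $\nu$, and then filter the resulting finite set of candidate schemes with the solvability decision procedure. Your explicit justification of the terminal-label normalization is a welcome detail that the paper leaves implicit.
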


\begin{proof}
Taking into account that the function $K_{\psi }$ is a total recursive
function, it is not difficult to show that there exists an algorithm
constructing the set $\{q_{j}:q_{j}\in \sigma ,\psi (q_{j})\leq r\}$ for
any number $r\in \omega $. From this fact, it follows that there exists an
algorithm, which, for an arbitrary number $r\in \omega $, an arbitrary
number $n\in \omega \setminus \{0\}$, and an arbitrary finite nonempty
subset $M$ of the set $\omega $, constructs the set $\mathrm{Tree}(\sigma
,r,n,M)$ of schemes of computation trees $S=(n,G)$ from $\mathrm{Tree}(\sigma )$
satisfying the following conditions:

\begin{itemize}
\item Terminal nodes of $S$ are labeled with numbers from $M$.

\item $h(S)\leq r$.

\item $\max \{\psi (q_{j}):q_{j}\in \sigma (S)\}\leq r$.

\item All variables in function and predicate expressions in the scheme $S$
belong to the set $X_{n+2^{r}}$.
\end{itemize}

Let $s=(n,\nu ,\beta _{1},\ldots ,\beta _{m})$ be a scheme of problem from $%
\mathrm{Probl}(\sigma )$ and let $M(s)$ be the set of values of the map $\nu
$. Denote $\mathrm{Tree}(s)=\mathrm{Tree}(\sigma ,\psi (s),n,M(s))$. We
now show that the set $\mathrm{Tree}(s)$ contains a scheme of computation tree that
is optimal relative to $\psi $, $s$, and $C(\alpha )$.

Using Lemma \ref{L5.3}, one can show that there exists a scheme of computation tree $%
S\in \mathrm{Tree}(\sigma )$, which is optimal relative to $\psi $, $s$,
and $C(\alpha )$, and in which numbers attached to terminal nodes belong to
the set $M(s)$ and all variables in function and predicate expressions in
the scheme $S$ belong to the set $X_{n+2^{h(S)}}$. Using Lemma \ref{L5.2},
we obtain that $\psi (S)\leq \psi (s)$. From this inequality and Lemma %
\ref{L5.1} it follows that $h(S)\leq \psi (s)$ and $\max \{\psi
(q_{j}):q_{j}\in \sigma (S)\}\leq \psi (s)$. Therefore the scheme $S$, which
is optimal relative to $\psi $, $s$, and $C(\alpha )$, belongs to the set $%
\mathrm{Tree}(s)$.

We now describe an algorithm solving the problem of optimization for the
triple $(\psi ,H,C)$. Let $\alpha \in H$ and $s=(n,\nu ,\beta _{1},\ldots
,\beta _{m})$ be a scheme of problem from the set $\mathrm{Probl}(\sigma )$.
First, we compute the value $\psi (s)$ and construct the set $M(s)$. Next we
construct the set $\mathrm{Tree}(s)=\mathrm{Tree}(\sigma ,\psi (s),n,M(s))$%
. Using the algorithm solving the problem of solvability for the quadruple $(%
\mathrm{Probl}(\sigma ),\mathrm{Tree}(\sigma ),H,C)$, we can find a scheme
of computation tree $S\in \mathrm{Tree}(s)$, which solves the scheme of problem $s$
relative to the class $C(\alpha )$ and has the minimum $\psi $-complexity
among such schemes of computation trees. The scheme of computation tree $S$ is optimal
relative to $\psi $, $s$, and $C(\alpha )$.
\end{proof}

\begin{corollary}
\label{C5.1}Let $\sigma $ be a finite signature, $\psi $ be a strictly
limited complexity measure over the signature $\sigma $, $C$ be a nonempty
class of structures of the signature $\sigma $, $H$ be a nonempty subset of
the set $H(\sigma )$, and the problem of solvability for the quadruple $(%
\mathrm{Probl}(\sigma ),\mathrm{Tree}(\sigma ),H,C)$ be decidable. Then the
problem of optimization for the triple $(\psi ,H,C)$ is decidable.
\end{corollary}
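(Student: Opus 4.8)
The plan is to derive this as an immediate special case of Theorem~\ref{T5.5}. The only thing that has to be checked is that the extra hypothesis of Theorem~\ref{T5.5} --- namely that $K_{\psi}$ is a total recursive function --- is automatically satisfied when $\sigma$ is finite, so that Theorem~\ref{T5.5} applies verbatim.

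First I would recall the definition of $K_{\psi}$: for $i\in\omega$ one sets $\omega_{\psi}(i)=\{q_j:q_j\in\sigma,\ \psi(q_j)=i\}$, and $K_{\psi}(i)=|\omega_{\psi}(i)|$ whenever $\omega_{\psi}(i)$ is finite, with $K_{\psi}(i)$ undefined otherwise. When $\sigma$ is finite, every subset of $\sigma$ is finite, so in particular each $\omega_{\psi}(i)$ is finite; hence $K_{\psi}$ is a total function on $\omega$. It remains to observe that $K_{\psi}$ is computable: since $\psi$ is by definition a computable complexity measure and $\sigma=\{q_0,\ldots,q_m\}$ is a fixed finite list, one simply evaluates $\psi(q_0),\ldots,\psi(q_m)$ (each $q_j$ regarded as a one-letter word in $\sigma^{\ast}$) and, on input $i$, counts how many of these values equal $i$. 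This is an effective procedure, so $K_{\psi}$ is total recursive. (In fact $K_{\psi}(i)=0$ for all $i$ larger than $\max_j\psi(q_j)$, which makes the computability even more transparent.)

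With that in hand the corollary follows at once: under its hypotheses we have a strictly limited complexity measure $\psi$ over $\sigma$ for which $K_{\psi}$ is a total recursive function, a nonempty class $C$ of structures, a nonempty $H\subseteq H(\sigma)$, and decidability of the solvability problem for the quadruple $(\mathrm{Probl}(\sigma),\mathrm{Tree}(\sigma),H,C)$ --- which are exactly the hypotheses of Theorem~\ref{T5.5}. Therefore Theorem~\ref{T5.5} gives decidability of the problem of optimization for the triple $(\psi,H,C)$, which is the assertion of the corollary.

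There is essentially no obstacle here; the entire content is the remark, already flagged in the paragraph preceding Theorem~\ref{T5.5}, that finiteness of $\sigma$ forces $K_{\psi}$ to be total recursive. If one wanted to be careful, the only point worth a sentence is why $\psi$ restricted to the finite set $\sigma$ being computable makes the counting function $K_{\psi}$ computable, but that is routine. So the write-up is just: note $\sigma$ finite $\Rightarrow$ $K_{\psi}$ total recursive, then invoke Theorem~\ref{T5.5}.
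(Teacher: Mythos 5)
Your proposal is correct and matches the paper's reasoning: the paper itself notes, just before Theorem~\ref{T5.5}, that for a finite signature $K_{\psi}$ is evidently a total recursive function, and the corollary is then an immediate application of that theorem. Your extra detail on why the counting function is computable is fine but not needed beyond that observation.
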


\begin{theorem}
\label{T5.6}Let $C$ be a nonempty class of structures of the signature $%
\sigma $, $H$ be a nonempty subset of the set $H(\sigma )$, $\psi $ be a
strictly limited complexity measure over the signature $\sigma $, and the
problem of satisfiability for the pair $(H\wedge \Phi (P(\exists ^{\ast
}),\sigma ),C)$ be undecidable. Then the problem of optimization for the
triple $(\psi ,H,C)$ is undecidable.
\end{theorem}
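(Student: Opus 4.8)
The plan is to establish the contrapositive: assuming that the problem of optimization for $(\psi ,H,C)$ is decidable, I will exhibit an algorithm deciding the problem of satisfiability for the pair $(H\wedge \Phi (P(\exists ^{\ast }),\sigma ),C)$, which by Theorem \ref{T5.1}(b) is the satisfiability problem corresponding to the solvability problem under consideration. So fix an algorithm $\mathcal{A}$ that, for every $\alpha \in H$ and every $s\in \mathrm{Probl}(\sigma )$, outputs a scheme of computation tree optimal relative to $\psi $, $s$, and $C(\alpha )$.

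Given an input sentence $\alpha \wedge \gamma $ with $\alpha \in H$ and $\gamma \in \Phi (P(\exists ^{\ast }),\sigma )$, I would first rewrite $\gamma $, exactly as in the proof of Theorem \ref{T5.1}, into a logically equivalent sentence $\exists x_{0}\cdots \exists x_{n-1}\,\theta $, where $\theta $ is a quantifier-free, equality-free formula in disjunctive normal form over atomic formulas from $Q_{n}(\sigma )$. Next I would construct a scheme of problem $s\in \mathrm{Probl}(\sigma )$ with $n$ input variables such that, for every structure $U$ with universe $A$ and every $\bar{a}\in A^{n}$, $\varphi _{(s,U)}(\bar{a})=1$ if $U\models \theta (\bar{a})$ and $\varphi _{(s,U)}(\bar{a})=0$ otherwise. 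This is effective: function expressions are used to form in auxiliary variables all the terms occurring in $\theta $, predicate expressions evaluate the atomic subformulas of $\theta $, and $\nu $ returns the truth value of $\theta $ viewed as a Boolean combination of those atoms. The passage from $\alpha \wedge \gamma $ to the pair $(\alpha ,s)$ is effective.

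Now I would run $\mathcal{A}(\alpha ,s)$, obtaining an optimal scheme $S_{0}$, and inspect $\psi (S_{0})$ together with the terminal label of $S_{0}$ when $S_{0}$ is a single node. Since $\psi $ is strictly limited, $\psi (w)>\psi (\lambda )$ for every nonempty word $w$, hence $\psi (S_{0})=\psi (\lambda )$ if and only if $S_{0}$ is a single terminal node, if and only if some single-terminal-node scheme solves $s$ relative to $C(\alpha )$; by the definition of solving relative to a class, the last condition holds exactly when $C(\alpha )=\emptyset $, or $C(\alpha )\neq \emptyset $ and $\varphi _{(s,U)}$ is the constant function with some value $c\in \{0,1\}$ for every $U\in C(\alpha )$. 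From this one reads off: if $\psi (S_{0})>\psi (\lambda )$, then $C(\alpha )\neq \emptyset $ and $\theta $ is true on some tuple in some structure of $C(\alpha )$, so $\alpha \wedge \gamma $ is satisfiable in $C$; conversely, if $\alpha \wedge \gamma $ is satisfiable in $C$, then either $\psi (S_{0})>\psi (\lambda )$, or $C(\alpha )\neq \emptyset $ and $\varphi _{(s,U)}\equiv 1$ on $C(\alpha )$, which forces $S_{0}$ to be the single node labeled $1$; finally, if $S_{0}$ is a single node whose label is $0$, or is a single node whose label differs from $1$, then $\alpha \wedge \gamma $ is unsatisfiable in $C$. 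The procedure therefore reports ``satisfiable'' exactly when $\psi (S_{0})>\psi (\lambda )$ or $S_{0}$ is the single node labeled $1$.

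The hard part, and the step I expect to require the most care, is the degenerate case $C(\alpha )=\emptyset $ (and, similarly, the case where $\theta $ has constant truth value throughout $C(\alpha )$): here the optimal scheme is a single terminal node, but $\mathcal{A}$ may return it with an arbitrary label, so the decoding above is not yet conclusive. I would treat this by additional, auxiliary invocations of $\mathcal{A}$ that pin down the relevant semantic data: for instance, applying $\mathcal{A}$ to the constantly-$0$ scheme of problem $s_{0}$ (with $\varphi _{(s_{0},U)}\equiv 0$) exploits that when $C(\alpha )\neq \emptyset $ the unique optimal scheme for $s_{0}$ is the single node labeled $0$, so a deviation reveals $C(\alpha )=\emptyset $; combining such calls with the choice of a distinguished pair of values for the range of $\nu $ lets one recover the constant value of $\varphi _{(s,U)}$ on $C(\alpha )$ whenever it is defined, and hence decide satisfiability in all cases. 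As this contradicts the undecidability of the satisfiability problem, the optimization problem for $(\psi ,H,C)$ cannot be decidable.
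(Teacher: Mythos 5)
Your main reduction is the one the paper uses: rewrite $\gamma$ in prenex--DNF form, encode the quantifier-free matrix into scheme(s) of problem whose value function is the truth value of the matrix, and exploit the strict limitedness of $\psi$ to conclude that an optimal scheme is a single terminal node exactly when the problem is constant over $C(\alpha)$ or $C(\alpha)=\emptyset$. (The paper builds one scheme of problem per disjunct and tests whether each returned optimal scheme coincides with the single node labelled $0$; you build one scheme for the whole matrix --- an immaterial difference.) Up to and including your decision rule ``report satisfiable iff $\psi(S_{0})>\psi(\lambda)$ or $S_{0}$ is the single node labelled $1$'', your argument matches the paper's, and that rule is indeed sound whenever $C(\alpha)\neq\emptyset$.

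The gap is in the degenerate case $C(\alpha)=\emptyset$, which you rightly single out (the paper's proof does not isolate it and simply asserts the key equivalence), but your patch does not close it. When $C(\alpha)=\emptyset$, the definition of ``solves relative to a class'' makes \emph{every} scheme with the right number of input variables solve \emph{every} scheme of problem, so for every query the optimal schemes are exactly the single terminal nodes, with arbitrary labels. Hence the algorithm $\mathcal{A}$ may answer all of your queries --- the main one, the auxiliary call on the constantly-$0$ problem $s_{0}$, and any relabelled variants --- precisely as it would if $C(\alpha)$ were a fixed one-element structure: it can return the node labelled $0$ on $s_{0}$ (so the ``deviation'' you hope for is never forced) and the node labelled $1$ on $s$, and your procedure then reports ``satisfiable'' on an unsatisfiable input. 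No battery of calls of the kind you describe can separate ``$C(\alpha)=\emptyset$'' from ``$C(\alpha)$ is a singleton one-element structure'', because in the former case the oracle's answers are completely unconstrained and can mimic the latter. What is actually needed is a way to decide whether $C(\alpha)=\emptyset$ from the optimization oracle, i.e., the satisfiability of $\alpha$ itself conjoined with a trivial existential sentence; your auxiliary calls do not deliver this, so the degenerate case remains open in your write-up (as it does, tacitly, in the paper's own one-line justification of the equivalence).
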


\begin{proof}
Let us assume the contrary: the problem of optimization for the triple $%
(\psi ,H,C)$ is decidable. We now describe an algorithm for solving the
problem of satisfiability for the pair $(H\wedge \Phi (P(\exists ^{\ast
}),\sigma ),C)$.

Let $\alpha \in H$ and $\gamma \in \Phi (P(\exists ^{\ast }),\sigma )$. We
construct a sentence of the signature $\sigma $ that is logically equivalent
to the sentence $\gamma $ and is of the form $\exists x_{0}\ldots \exists
x_{n-1}((\gamma _{11}^{\delta _{11}}\wedge \cdots \wedge \gamma
_{1m_{1}}^{\delta _{1m_{1}}})\vee \cdots \vee (\gamma _{k1}^{\delta
_{k1}}\wedge \cdots \wedge \gamma _{km_{k}}^{\delta _{km_{k}}}))$, where,
for $j=1,\ldots ,k$ and $i=1,\ldots ,m_{j}$, $\delta _{ji}\in E_{2}$ and $%
\gamma _{ji}\ $is an atomic formula of the signature $\sigma $ of the form $%
q_{l}(t_{1},\ldots ,t_{p})$, where $q_{l}$ is a $p$-ary predicate symbol
from $\sigma $ and $t_{1},\ldots ,t_{p}$ are terms of the signature $\sigma $
with variables from $X_{n}$.

For $j=1,\ldots ,k$, we construct a scheme of problem $s_{j}$ from $\mathrm{%
Probl}(\sigma )$ with special representation $(n,\nu _{j},\gamma
_{j1},\ldots ,\gamma _{jm_{j}})$ such that $\nu
_{j}:E_{2}^{m_{j}}\rightarrow E_{2}$ and, for any $\bar{\delta}\in
E_{2}^{m_{j}}$, $\nu _{j}(\bar{\delta})=1$ if and only if $\bar{\delta}%
=(\delta _{j1},\ldots ,\delta _{jm_{j}})$. It is clear that, for any
structure $U$ of the signature $\sigma $ and any tuple $\bar{a}\in A^{n}$,
where $A$ is the universe of $U$,%
\[
\varphi _{(s_{j},U)}(\bar{a})=\left\{
\begin{array}{ll}
0, & \mathrm{if}\ U\models \lnot (\gamma _{j1}(\bar{a})^{\delta _{j1}}\wedge
\cdots \wedge \gamma _{jm_{j}}(\bar{a})^{\delta _{jm_{j}}}), \\
1, & \mathrm{if}\ U\models \gamma _{j1}(\bar{a})^{\delta _{j1}}\wedge \cdots
\wedge \gamma _{jm_{j}}(\bar{a})^{\delta _{jm_{j}}}.%
\end{array}%
\right.
\]

We denote by $S_{0}$ the scheme of computation tree from $\mathrm{Tree}(\sigma )$,
which consists of only one node labeled with the number $0$. It is clear
that, for any structure $U$ of the signature $\sigma $ and any tuple $\bar{a}%
\in A^{n}$, where $A$ is the universe of $U$, $\varphi _{(S_{0},U)}(\bar{a}%
)=0$.

Using an algorithm that solves the problem of optimization for the triple $%
(\psi ,H,C)$, for $j=1,\ldots ,k$, we find a scheme of computation tree $S_{j}\in $ $%
\mathrm{Tree}(\sigma )$, which solves the scheme of problem $s_{j}$
relative to the class $C(\alpha )$ and has the minimum $\psi $-complexity.

Using the properties of the strictly limited complexity measure $\psi $, it
is not difficult to show that a structure $U\in $ $C$ such that $U\models
\alpha \wedge \gamma $ exists if and only if there exists $j\in \{1,\ldots
,k\}$ for which the scheme of computation tree $S_{j}$ does not coincide with the
scheme of computation tree $S_{0}$. Therefore the problem of satisfiability for the
pair $(H\wedge \Phi (P(\exists ^{\ast }),\sigma ),C)$ is decidable but this
is impossible.
\end{proof}

\subsection{Equality Is Allowed}

We now consider the case, when the equality is allowed.

Let $\sigma $ be a finite or countable signature. If $\sigma $ is finite,
then we represent it in the form $\sigma =\{q_{1},\ldots ,q_{m}\}$. If $%
\sigma $ is infinite, then we represent it in the form $\sigma
=\{q_{1},q_{2},\ldots \}$. Let $\sigma _{=}=\sigma \cup \{q_{0}\}$, where $%
q_{0}$ is the symbol denoting equality $=$, and $\sigma _{=}^{\ast }$ be the
set of finite words in the alphabet $\sigma _{=}$ including the empty word $%
\lambda $.

\begin{definition}
A \emph{e-complexity measure} over the signature $\sigma $ is an arbitrary
map $\psi :\sigma _{=}^{\ast }\rightarrow \omega $. The e-complexity measure
$\psi $ will be called \emph{strictly limited} if it is computable and,
for any $\alpha _{1},\alpha _{2},\alpha _{3}\in \sigma _{=}^{\ast }$, it
possesses the following property: if $\alpha _{2}\neq \lambda $, then $\psi (\alpha _{1}\alpha
_{2}\alpha _{3})>\psi (\alpha _{1}\alpha _{3})$.
\end{definition}
\index{E-complexity measure over signature}

The prefix \textquotedblleft e-\textquotedblright\ here and later indicates
the presence of the equality.

Let $\psi $ be an e-complexity measure over the signature $\sigma $. We
extend it to the sets $\mathrm{Probl}^{=}(\sigma )\ $and $\mathrm{Tree}%
^{=}(\sigma )$. Let $\beta $ be a finite sequence of function and predicate
expressions of the signature $\sigma $. We correspond to $\beta $ a word $%
\mathrm{word}(\beta )$ from $\sigma _{=}^{\ast }$. If the length of $\beta $
is equal to $0$, then $\mathrm{word}(\beta )=\lambda $. If $\beta =\beta
_{1},\ldots ,\beta _{m}$, then $\mathrm{word}(\beta )=$ $q_{i_{1}}\cdots
q_{i_{m}}$, where, for $j=1,\ldots ,m$, $q_{i_{j}}$ is the symbol from $%
\sigma _{=}$ contained in the expression $\beta _{j}$. In particular, if $%
\beta _{j}$ has the form $x_{l_{1}}=x_{l_{2}}$, then $q_{i_{j}}=q_{0}$.

Let $s=(n,\nu ,\beta _{1},\ldots ,\beta _{m})$ be a scheme of problem from
the set $\mathrm{Probl}^{=}(\sigma )$. Then $\psi (s)=\psi (\mathrm{word}%
(\beta _{1},\ldots ,\beta _{m}))$.

Let $S=(n,G)$ be a scheme of computation tree from the set $\mathrm{Tree}^{=}(\sigma
)$ and $\tau =w_{1},d_{1},w_{2},$ $d_{2},\ldots ,w_{m},d_{m},w_{m+1}$ be a
complete path of the scheme $S$. We denote $\beta _{\tau }=\beta _{1},\ldots
,\beta _{m}$ the sequence of function and predicate expressions attached to
nodes $w_{1},\ldots ,w_{m}$. Then $\psi (S)=\max \{\psi (\mathrm{word}(\beta
_{\tau })):\tau \in \mathrm{Path}(S)\}$, where $\mathrm{Path}(S)$ is the set
of complete paths of the scheme $S$. The value $\psi (S)$ is called the $%
\psi $-\emph{complexity} of the scheme of computation tree $S$.

Let us remind that $H^{=}(\sigma )$ is the set of sentences of the signature
$\sigma $. Let $\psi $ be a strictly limited e-complexity measure over the
signature $\sigma $, $C$ be a nonempty class of structures of the signature $%
\sigma $, and $H$ be a nonempty subset of the set $H^{=}(\sigma )$.

\begin{definition}
We now define the \emph{problem of e-optimization} for the triple $(\psi
,H,C)$: for arbitrary sentence $\alpha \in H$ and scheme of problem $s\in
\mathrm{Probl}^{=}(\sigma )$, we should find a scheme of computation tree $S\in
\mathrm{Tree}^{=}(\sigma )$, which solves the scheme of problem $s$
relative to the class $C(\alpha )$ and has the minimum $\psi $-complexity.
\end{definition}
\index{Algorithmic problem!problem of e-optimization}

Let $\psi $ be a strictly limited e-complexity measure over the signature $%
\sigma $. For $i\in \omega $, we denote $\omega _{\psi
}(i)=\{q_{j}:q_{j}\in \sigma _{=},\psi (q_{j})=i\}$. Define a partial
function $K_{\psi }:\omega \rightarrow \omega $ as follows. Let $i\in
\omega $. If $\omega _{\psi }(i)$ is a finite set, then $K_{\psi
}(i)=|\omega _{\psi }(i)|$. If $\omega _{\psi }(i)$ is an infinite
set, then the value $K_{\psi }(i)$ is indefinite.

Proof of the next statement is similar to the proof of Theorem \ref{T5.5}.

\begin{theorem}
\label{T5.7}Let $\psi $ be a strictly limited e-complexity measure over the
signature $\sigma $ for which $K_{\psi }$ is a total recursive function,  $C$
be a nonempty class of structures of the signature $\sigma $, $H$ be a
nonempty subset of the set $H^{=}(\sigma )$, and the problem of solvability
for the quadruple $(\mathrm{Probl}^{=}(\sigma ),\mathrm{Tree}^{=}(\sigma
),H,C)$ be decidable. Then the problem of e-optimization for the triple $%
(\psi ,H,C)$ is decidable.
\end{theorem}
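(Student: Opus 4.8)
The plan is to follow the same strategy as the proof of Theorem~\ref{T5.5}, since Theorem~\ref{T5.7} is its ``equality is allowed'' analogue. The overall idea is: given a scheme of problem $s$, produce a finite, effectively computable set of candidate schemes of computation trees that is guaranteed to contain an optimal one, and then use the assumed decidability of the solvability problem to search through this finite set.

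First I would establish the analogues of Lemmas~\ref{L5.1}, \ref{L5.2}, and \ref{L5.3} for the setting with equality. Lemma~\ref{L5.2} goes through verbatim: from $s=(n,\nu,\beta_1,\ldots,\beta_m)$ one builds a ``linear'' scheme of computation tree whose every complete path carries the expression sequence $\beta_1,\ldots,\beta_m$, so the optimal $\psi$-complexity is at most $\psi(s)$. Lemma~\ref{L5.1} also carries over: for a strictly limited e-complexity measure one shows $\psi(\alpha)\ge|\alpha|$ and $\psi(\alpha)\ge\psi(q_j)$ for every letter $q_j$ of $\alpha\in\sigma_=^{\ast}$, hence $\psi(S)\ge h(S)$ and $\psi(S)\ge\max\{\psi(q_j):q_j\in\sigma_=(S)\}$ where $\sigma_=(S)$ now includes the equality symbol $q_0$ when it occurs. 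Lemma~\ref{L5.3} needs only the obvious modification that predicate expressions of the form $x_{l_1}=x_{l_2}$ are also subject to the variable renaming; the bound $X_{n+2^{h(S_1)}}$ on the variables is unchanged since only function nodes introduce new ``output'' variables.

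Next I would argue, exactly as in Theorem~\ref{T5.5}, that since $K_\psi$ is total recursive one can effectively construct the set $\{q_j:q_j\in\sigma_=,\ \psi(q_j)\le r\}$ for any $r\in\omega$, and from it the finite set $\mathrm{Tree}^{=}(\sigma,r,n,M)$ of all schemes $S=(n,G)\in\mathrm{Tree}^{=}(\sigma)$ whose terminal labels lie in the finite set $M$, with $h(S)\le r$, with $\max\{\psi(q_j):q_j\in\sigma_=(S)\}\le r$, and with all variables in $X_{n+2^r}$. Then, given $\alpha\in H$ and $s=(n,\nu,\beta_1,\ldots,\beta_m)$, one computes $r=\psi(s)$ and $M=M(s)$ (the range of $\nu$), forms the finite set $\mathrm{Tree}^{=}(s)=\mathrm{Tree}^{=}(\sigma,\psi(s),n,M(s))$, and using the three lemmas shows that this set contains a scheme optimal relative to $\psi$, $s$, $C(\alpha)$. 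Finally, the algorithm for solvability for $(\mathrm{Probl}^{=}(\sigma),\mathrm{Tree}^{=}(\sigma),H,C)$ is run on each member of $\mathrm{Tree}^{=}(s)$ to filter out those that solve $s$ relative to $C(\alpha)$, and among the survivors one picks one of minimum $\psi$-complexity.

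I do not expect any genuine obstacle here; the whole point is that the argument of Theorem~\ref{T5.5} is insensitive to whether equality appears, provided the complexity measure is taken over $\sigma_=^{\ast}$ rather than $\sigma^{\ast}$ and the equality symbol $q_0$ is treated as just another symbol of weight $\psi(q_0)$. The one place that deserves a sentence of care is confirming that $q_0$ is handled consistently everywhere --- in the definition of $\mathrm{word}(\beta)$, in $\sigma_=(S)$, and in the effective enumeration driven by $K_\psi$ (whose domain now ranges over $\sigma_=$) --- so that the bounds $h(S)\le\psi(s)$ and $\max\{\psi(q_j):q_j\in\sigma_=(S)\}\le\psi(s)$ derived from $\psi(S)\le\psi(s)$ really do place the optimal scheme inside $\mathrm{Tree}^{=}(s)$. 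Since this is routine, I would simply write ``Proof of the next statement is similar to the proof of Theorem~\ref{T5.5}'' or give the short adaptation sketched above.
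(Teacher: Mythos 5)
Your proposal is correct and follows exactly the route the paper takes: the paper's entire proof of Theorem~\ref{T5.7} is the remark that it is similar to the proof of Theorem~\ref{T5.5}, and your sketch fills in precisely the intended adaptation (equality-aware versions of Lemmas~\ref{L5.1}--\ref{L5.3}, the effectively constructible finite candidate set over $\sigma_=$, and the search using the solvability algorithm). No gaps.
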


\begin{corollary}
\label{C5.2}Let $\sigma $ be a finite signature, $\psi $ be a strictly
limited e-complexity measure over the signature $\sigma $, $C$ be a nonempty
class of structures of the signature $\sigma $, $H$ be a nonempty subset of
the set $H^{=}(\sigma )$, and the problem of solvability for the quadruple $(%
\mathrm{Probl}^{=}(\sigma ),\mathrm{Tree}^{=}(\sigma ),H,C)$ be decidable.
Then the problem of e-optimization for the triple $(\psi ,H,C)$ is decidable.
\end{corollary}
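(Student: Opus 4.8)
The plan is to mirror the proof of Theorem~\ref{T5.5}, making only the modifications forced by the presence of equality. The goal is to show that, under the hypotheses, there is an algorithm that, given $\alpha\in H$ and a scheme of problem $s\in\mathrm{Probl}^{=}(\sigma)$, produces a scheme of computation tree in $\mathrm{Tree}^{=}(\sigma)$ that solves $s$ relative to $C(\alpha)$ and has minimum $\psi$-complexity. As in Theorem~\ref{T5.5}, the strategy is to confine the search to a finite, effectively constructible set of candidate schemes, and then use the decidability of the solvability problem to test each candidate and pick a best one.

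First I would establish the analogues of Lemmas~\ref{L5.1}, \ref{L5.2}, and \ref{L5.3} for the equality setting. The bounding lemma (analogue of Lemma~\ref{L5.1}) needs the property $\psi(\alpha)\geq|\alpha|$ and $\psi(\alpha)\geq\psi(q_j)$ for each letter $q_j$ in $\alpha$, which follows verbatim from the strictly limited property on $\sigma_{=}^{\ast}$; here $\sigma(S)$ should be read as the set of symbols from $\sigma_{=}$ (including $q_0$ for equality) appearing in expressions of $S$, so that $h(S)\leq\psi(S)$ and $\max\{\psi(q_j):q_j\in\sigma(S)\}\leq\psi(S)$. The upper bound $\psi(S)\leq\psi(s)$ for an optimal $S$ (analogue of Lemma~\ref{L5.2}) is obtained exactly as before by building the trivial ``read all expressions of $s$ along a single path'' tree, which now may include equality expressions but is still a legal element of $\mathrm{Tree}^{=}(\sigma)$. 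The change-of-variables lemma (analogue of Lemma~\ref{L5.3}) goes through unchanged: predicate nodes labeled $x_{l_1}=x_{l_2}$ are treated just like other predicate nodes when renaming variables, and the bound $X_{n+2^{h(S_1)}}$ is unaffected since it counts only function nodes.

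Next I would show, using $K_\psi$ total recursive (now with $\omega_\psi(i)=\{q_j:q_j\in\sigma_{=},\psi(q_j)=i\}$), that one can effectively construct the finite set $\{q_j\in\sigma_{=}:\psi(q_j)\leq r\}$ for any $r$, and hence effectively enumerate the finite set $\mathrm{Tree}^{=}(\sigma,r,n,M)$ of schemes $S=(n,G)$ from $\mathrm{Tree}^{=}(\sigma)$ whose terminal labels lie in $M$, with $h(S)\leq r$, $\max\{\psi(q_j):q_j\in\sigma(S)\}\leq r$, and all variables in $X_{n+2^r}$ (predicate nodes with equality are allowed, and there are only finitely many equality expressions on the variable set $X_{n+2^r}$, so the set remains finite). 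Given $s=(n,\nu,\beta_1,\dots,\beta_m)$, set $\mathrm{Tree}^{=}(s)=\mathrm{Tree}^{=}(\sigma,\psi(s),n,M(s))$ where $M(s)$ is the range of $\nu$. Combining the three lemmas as in Theorem~\ref{T5.5}, an optimal scheme relative to $\psi$, $s$, $C(\alpha)$ can be assumed to lie in $\mathrm{Tree}^{=}(s)$. The algorithm then computes $\psi(s)$ and $M(s)$, builds $\mathrm{Tree}^{=}(s)$, and runs the decision procedure for solvability of $(\mathrm{Probl}^{=}(\sigma),\mathrm{Tree}^{=}(\sigma),H,C)$ on each candidate to find one of minimum $\psi$-complexity that solves $s$ relative to $C(\alpha)$; this is optimal.

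The only genuinely new point to check — and the one I would be most careful about — is that introducing $q_0$ for equality does not break the finiteness of $\mathrm{Tree}^{=}(\sigma,r,n,M)$ or the effectivity of its construction: one must confirm that $\psi(q_0)$ is a concrete computable value (it is, since $\psi$ is computable on $\sigma_{=}^{\ast}$), that $q_0$ is or is not counted by $K_\psi$ consistently, and that a predicate node labeled $x_{l_1}=x_{l_2}$ contributes the letter $q_0$ to $\mathrm{word}(\beta_\tau)$ exactly as specified in the extension of $\psi$ to $\mathrm{Tree}^{=}(\sigma)$. Since for each fixed $r$ there are finitely many labels of the form $x_{l_1}=x_{l_2}$ with indices bounded by $n+2^r$ (and their $\psi$-weight is $\psi(q_0)$, which must be $\leq r$ for the node to appear), and finitely many non-equality expressions over the bounded symbol set and variable set, the candidate set is finite and recursively listable, so the rest of the argument is routine. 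Everything else is a line-by-line transcription of the proof of Theorem~\ref{T5.5} with $\mathrm{Tree}(\sigma)$, $\mathrm{Probl}(\sigma)$, $H(\sigma)$ replaced by $\mathrm{Tree}^{=}(\sigma)$, $\mathrm{Probl}^{=}(\sigma)$, $H^{=}(\sigma)$.
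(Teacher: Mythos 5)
Your proposal is correct and follows essentially the same route as the paper: the paper obtains Corollary~\ref{C5.2} by invoking Theorem~\ref{T5.7} (whose proof it declares to be a transcription of Theorem~\ref{T5.5} with equality), together with the observation that for a finite signature $K_{\psi}$ is automatically total recursive, which is exactly the argument you carry out in expanded form. The details you single out for care --- the letter $q_{0}$ in $\mathrm{word}(\beta_{\tau})$, the finiteness and effective constructibility of the candidate set $\mathrm{Tree}^{=}(\sigma,r,n,M)$ --- are precisely the points the paper leaves implicit, and you handle them correctly.
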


Proof of the next statement is similar to the proof of Theorem \ref{T5.6}.

\begin{theorem}
\label{T5.8}Let $C$ be a nonempty class of structures of the signature $%
\sigma $, $H$ be a nonempty subset of the set $H^{=}(\sigma )$, $\psi $ be a
strictly limited e-complexity measure over the signature $\sigma $, and the
problem of satisfiability for the pair $(H\wedge \Phi ^{=}(P(\exists ^{\ast
}),\sigma ),C)$ be undecidable. Then the problem of e-optimization for the
triple $(\psi ,H,C)$ is undecidable.
\end{theorem}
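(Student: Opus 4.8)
The plan is to argue by contradiction, following the proof of Theorem~\ref{T5.6} but carrying equality through all the constructions: assuming the problem of e-optimization for $(\psi,H,C)$ is decidable, I would exhibit an algorithm deciding the problem of satisfiability for the pair $(H\wedge\Phi^{=}(P(\exists^{\ast}),\sigma),C)$, which contradicts the hypothesis that the latter is undecidable.

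Fix an arbitrary instance $\alpha\wedge\gamma$ with $\alpha\in H$ and $\gamma\in\Phi^{=}(P(\exists^{\ast}),\sigma)$. First I would rewrite $\gamma$, by bringing its quantifier-free matrix into disjunctive normal form and padding the prefix with dummy variables, into a logically equivalent sentence $\exists x_{0}\cdots\exists x_{n-1}\big((\gamma_{11}^{\delta_{11}}\wedge\cdots\wedge\gamma_{1m_{1}}^{\delta_{1m_{1}}})\vee\cdots\vee(\gamma_{k1}^{\delta_{k1}}\wedge\cdots\wedge\gamma_{km_{k}}^{\delta_{km_{k}}})\big)$, where each $\delta_{ji}\in E_{2}$ and each $\gamma_{ji}$ is an atomic formula of $\sigma$ over $X_{n}$ --- the only change from Theorem~\ref{T5.6} being that $\gamma_{ji}$ may now be an equality $t_{1}=t_{2}$. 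For $j=1,\dots,k$ I would then take $s_{j}\in\mathrm{Probl}^{=}(\sigma)$ with special representation $(n,\nu_{j},\gamma_{j1},\dots,\gamma_{jm_{j}})$, where $\nu_{j}:E_{2}^{m_{j}}\to E_{2}$ satisfies $\nu_{j}(\bar\delta)=1$ iff $\bar\delta=(\delta_{j1},\dots,\delta_{jm_{j}})$, so that $\varphi_{(s_{j},U)}(\bar a)=1$ exactly when $U\models\gamma_{j1}(\bar a)^{\delta_{j1}}\wedge\cdots\wedge\gamma_{jm_{j}}(\bar a)^{\delta_{jm_{j}}}$ and $\varphi_{(s_{j},U)}(\bar a)=0$ otherwise. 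Let $S_{0}\in\mathrm{Tree}^{=}(\sigma)$ be the scheme with $n$ input variables whose unique node is a terminal node labeled $0$; then $\varphi_{(S_{0},U)}\equiv 0$ for every structure $U$, and $\psi(S_{0})=\psi(\lambda)$.

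Next, running the assumed e-optimization algorithm on each input $(\alpha,s_{j})$ I obtain a scheme $S_{j}$ solving $s_{j}$ relative to $C(\alpha)$ with minimum $\psi$-complexity, and the algorithm answers ``$\alpha\wedge\gamma$ is satisfiable in $C$'' iff some $S_{j}$ does not coincide with $S_{0}$. Establishing the correctness of this test is the heart of the argument, and it is exactly here that the strictly limited e-complexity measure enters: since $\psi(w)=\psi(\lambda w\lambda)>\psi(\lambda\lambda)=\psi(\lambda)$ for every nonempty word $w$, the value $\psi(\lambda)$ is the unique minimum of $\psi$, so a scheme of computation tree has $\psi$-complexity $\psi(\lambda)$ if and only if it is a single terminal node. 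Hence, when $C(\alpha)\neq\emptyset$, the set of minimum-complexity solvers of $s_{j}$ relative to $C(\alpha)$ equals $\{S_{0}\}$ precisely when $\varphi_{(s_{j},U)}\equiv 0$ for all $U\in C(\alpha)$, so in that case $S_{j}=S_{0}$, whereas $S_{j}\neq S_{0}$ forces some $U\in C(\alpha)$ and some $\bar a$ with $U\models\gamma_{j1}(\bar a)^{\delta_{j1}}\wedge\cdots\wedge\gamma_{jm_{j}}(\bar a)^{\delta_{jm_{j}}}$, which gives $U\models\gamma$ and therefore $U\models\alpha\wedge\gamma$. Quantifying over $j$ and using that $U\models\gamma$ is equivalent to some disjunct being witnessed in $U$ then yields the required equivalence, and the contradiction with the undecidability of satisfiability completes the proof. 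The e-complexity bookkeeping (extending $\psi$ to $\mathrm{Probl}^{=}(\sigma)$ and $\mathrm{Tree}^{=}(\sigma)$ via $\mathrm{word}(\cdot)$) is identical to that used for Theorem~\ref{T5.6} up to the presence of the symbol $q_{0}$ for equality.

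The step I expect to be the genuine obstacle is the correctness of the ``some $S_{j}\neq S_{0}$'' test, and within it the degenerate case $C(\alpha)=\emptyset$: there every scheme of arity $n$ solves $s_{j}$, so the minimum-complexity solvers form a large set, and one must argue --- again from the strict minimality of $\psi(\lambda)$, and if necessary by also comparing with the scheme returned by the optimizer for the constant-$0$ scheme of problem --- that the reduction still outputs ``not satisfiable'' in this case. The DNF normalization, the definition of the schemes $s_{j}$, and the transfer of the argument from the equality-free setting of Theorem~\ref{T5.6} are routine.
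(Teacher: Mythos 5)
Your proposal follows the paper's proof exactly: the paper proves Theorem~\ref{T5.8} by stating only that the argument is similar to the proof of Theorem~\ref{T5.6}, and what you write is precisely that adaptation --- the same prenex-DNF normalization of $\gamma$, the same schemes of problems $s_{j}$ built from the disjuncts with $\nu_{j}(\bar\delta)=1$ only on $(\delta_{j1},\ldots,\delta_{jm_{j}})$, the same comparison of each optimizer output $S_{j}$ with the one-node scheme $S_{0}$, and the same use of strict limitedness to show that $\psi(\lambda)$ is attained exactly by single-terminal-node schemes, so that over a nonempty class $C(\alpha)$ the unique minimum-complexity solver of an identically-zero problem is $S_{0}$.

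The point you flag, the degenerate case $C(\alpha)=\emptyset$, is indeed the only non-routine step, and the paper does not treat it either: it is absorbed into ``it is not difficult to show'' in the proof of Theorem~\ref{T5.6}. The worry is legitimate. By the paper's definition of solving relative to a class, when $C(\alpha)=\emptyset$ every scheme with $n$ input variables solves $s_{j}$ relative to $C(\alpha)$, so every single terminal node, whatever its label, is an optimal solver, and a correct optimization algorithm may return one different from $S_{0}$; the test would then report ``satisfiable'' although $\alpha\wedge\gamma$ has no model in $C$. Your suggested repair --- also running the optimizer on the constant-$0$ scheme of problem --- does not obviously close this, because in the degenerate case that output is equally unconstrained. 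So your proposal reproduces the paper's argument faithfully, including the one step it leaves unargued; a fully rigorous writeup would need either an independent way to detect $C(\alpha)=\emptyset$ (which is itself an instance of the satisfiability problem being reduced) or a convention that the optimizer returns a canonical optimal scheme.
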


\section{Conclusions\label{S5.7}}

In this paper, we studied relationships among three algorithmic problems involving computation trees: the optimization, solvability, and satisfiability problems. We also studied the decidability of these problems in different situations. In the future, we are planning to consider heuristics for the optimization of computation trees.

\subsection*{Acknowledgements}

Research reported in this publication was supported by King Abdullah
University of Science and Technology (KAUST).

\bibliographystyle{spmpsci}
\bibliography{abc_bibliography}

\end{document}